\documentclass[runningheads]{llncs}

\usepackage[T1]{fontenc}
\usepackage{parskip}
\usepackage{amsmath}
\usepackage{amssymb}
\usepackage{mathtools}
\usepackage{thmtools}
\usepackage{bm}

\usepackage[pdfpagelabels,bookmarks=false]{hyperref}
\hypersetup{colorlinks, linkcolor=darkblue, citecolor=darkgreen, urlcolor=darkblue}
\usepackage[capitalize, nameinlink]{cleveref}
\usepackage{enumitem}

\usepackage{float}
\usepackage{caption}
\usepackage{subcaption}

\usepackage{tikz}
\usetikzlibrary{positioning}
\usetikzlibrary{calc}
\usetikzlibrary{decorations.markings}
\usetikzlibrary{decorations.pathreplacing}
\usetikzlibrary{arrows.meta}
\usetikzlibrary{shapes.geometric}

\foreach \x in {A,...,Z}{
	\expandafter\xdef\csname b\x\endcsname{\noexpand\mathbb{\x}}
	\expandafter\xdef\csname c\x\endcsname{\noexpand\mathcal{\x}}
	\expandafter\xdef\csname f\x\endcsname{\noexpand\mathfrak{\x}}
}

\renewcommand{\bar}[1]{\overline{#1}}
\renewcommand{\hat}[1]{\widehat{#1}}

\newcommand{\abs}[1]{\left\lvert#1\right\rvert}

\usepackage{xurl} 
\usepackage[backend=bibtex, maxbibnames=10]{biblatex}
\usepackage{setspace}
\usepackage{colortbl}
\usepackage{tcolorbox}
\usepackage{thm-restate}

\newcommand{\Pminexcess}{\textsc{MinExcess}}
\newcommand{\Psubspaceavoidingminexcess}{\textsc{LSA-MinExcess}}
\newcommand{\Pnonzerominexcess}{\textsc{NZ-MinExcess}}

\definecolor{darkblue}{rgb}{0,0,0.38}
\definecolor{darkred}{rgb}{0.8,0,0}
\definecolor{darkgreen}{rgb}{0.1,0.35,0}
\definecolor{lightblue}{rgb}{0.35,0.6,0.8}

\begin{document}
\title{Nucleolus Computation by\\ Non-Zero-Constrained Optimization}
\author{Daniel Ebert and Antonia Ellerbrock}
\institute{Research Institute for Discrete Mathematics, Bonn, Germany\\
	\email{\{ebert,ellerbrock\}@dm.uni-bonn.de}}

\maketitle

\begin{abstract}
	We extend the list of games where the nucleolus is computable in polynomial time.
	Based on the classical MPS scheme, nucleolus computation can be reduced to 
	the problem of finding a coalition with minimum excess that does not belong to a given linear subspace.
	We call this problem \Psubspaceavoidingminexcess{}, 
	and show that it is equivalent to \Pnonzerominexcess{}:
	Given integral values per player, find a coalition with minimum excess whose player values do not sum up to $0$.
	Exploiting this representation, we prove that the nucleolus is computable in polynomial time for arboricity games, network strength games,
	and certain $b$-matching games.
	Along these lines, we show that for $b$-matching games with $b \leq 2$, \Psubspaceavoidingminexcess{} is polynomially equivalent to \textsc{Shortest Non-Zero Cycle}.

	Further, we prove that in general, linear subspace avoidance strictly increases the complexity of the minimum excess problem, even for monotone games.
	We still provide a reduction that trades linear subspace avoidance against an arbitrarily small approximation error.
	Finally, we show that the nucleolus is unstable in the following sense:
	A small change in the value function of the game can lead to a
	change in the nucleolus that is exponential in the number of players.
\end{abstract}

\section{Introduction}

A \emph{cooperative game} is a pair $(P,v)$ of a finite set of
\emph{players} $P$ and a value function $v: 2^P\to\bR$ on \emph{coalitions} $S \subseteq P$ with $v(\emptyset)=0$. $(P,v)$ is called
\emph{monotone} if \mbox{$v(S)\leq v(T)$} for all $S\subseteq T \subseteq P$,
and \emph{superadditive} if $v(S \cup T)\geq v(S) + v(T)$ for all $S,T\subseteq P$.
Usually, $v$ is not given explicitly, but by some combinatorial optimization problem, e.g., as the maximum weight of a matching in the subgraph $G[S]$ of some weighted graph $G$ for $S \subseteq P$ \cite{deng1999algorithmic}.
This way, the input size of $(P,v)$ can be controlled.
Given a cooperative game $(P,v)$, a \emph{cost allocation} is a vector $y\in\bR^P$.
We will write \mbox{$y(S) \coloneqq \sum_{p \in S} y_p$} for $S \subseteq P$. 
The \emph{nucleolus} by \textcite{schmeidler1969nucleolus}
defines a unique and supposedly fair cost allocation.
For a formal definiton, let $y\in\bR^P$, and consider the excess vector $\theta(y)$
that contains all coalitional \emph{excess} values
$(y(S)-v(S))_{S\subseteq P}$ in non-decreasing order.
The \emph{nucleolus} is the unique cost allocation
that lexicographically maximizes the excess vector $\theta(y)$
among all $y\in \bR^P$ with $y(P) = v(P)$.

\subsection{Our Results}

We study the cost allocation concept of the nucleolus for cooperative games.
Its computation can be reduced to separating over the polytopes in the classical MPS scheme:
For a proposed cost allocation, we have to find a coalition with smallest excess that is not yet fixed.
The latter requirement results in a linear subspace avoidance constraint.
We show that linear subspace avoidance (LSA) is polynomially equivalent to requring a \emph{non-zero} (NZ) coalition in the following sense:
Given integral values per player, the total sum of these values in our coalition cannot be $0$.   

\begin{restatable*}{proposition}{lsanzequivalence}
	\label{lem:equivalence_subspace_avoidance_non_zero}
	The problems \Psubspaceavoidingminexcess{} and \Pnonzerominexcess{} are polynomially equivalent.
\end{restatable*}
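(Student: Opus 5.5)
The plan is to give a polynomial reduction in each direction. I assume \Psubspaceavoidingminexcess{} receives a game $(P,v)$, an allocation $y$ and a rational linear subspace $L \subseteq \bR^P$, and asks for a coalition $S$ with $\chi^S \notin L$ minimizing $y(S)-v(S)$. I assume \Pnonzerominexcess{} receives $(P,v)$, $y$ and integral values $w \in \bZ^P$, and asks for a coalition $S$ with $w(S)\neq 0$ minimizing the excess. If $L$ is given by a spanning set rather than by equations, I first convert it.

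\emph{From \Pnonzerominexcess{} to \Psubspaceavoidingminexcess{}.} This direction is immediate. Set $L \coloneqq \{x \in \bR^P : w^\top x = 0\}$, a hyperplane (or all of $\bR^P$ if $w=0$). For every coalition $S$ we have $\chi^S \notin L$ if and only if $w(S) \neq 0$. Hence the feasible coalitions and their excesses coincide, and a single oracle call suffices.

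\emph{From \Psubspaceavoidingminexcess{} to \Pnonzerominexcess{}.} First compute an integral matrix $A \in \bZ^{k\times P}$ with $L = \ker A$, for example by Gaussian elimination on a basis of $L$ followed by clearing denominators. This takes polynomial time and keeps the encoding length polynomial. Then $\chi^S \notin L$ holds if and only if $a_i(S) \neq 0$ for some row $a_i$ of $A$. One option is to call the \Pnonzerominexcess{} oracle once for each row $w = a_i$ and return the best coalition found. This is a polynomial Turing reduction: every coalition returned is feasible for \Psubspaceavoidingminexcess{}, and an optimal feasible $S$ is feasible for at least one row.

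A single call also works. Let $N \coloneqq |P|\cdot \max_{i,p}|a_{i,p}|$, so that $|a_i(S)| \le N$ for every coalition $S$. Choose $M \coloneqq 2N+1$ and set $w \coloneqq \sum_{i=1}^k M^{i-1} a_i \in \bZ^P$. Then $w(S) = \sum_i M^{i-1} a_i(S)$ is a balanced base-$M$ representation with digits in $[-N,N]$. It therefore vanishes if and only if all $a_i(S)=0$, that is, if and only if $\chi^S \in L$. The encoding length of $w$ is $O(k\log M)$ bits per entry, which is polynomial.

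The main obstacle I expect is bookkeeping rather than mathematics. One must fix the input representation of $L$ (spanning vectors versus equations) and check that the passage to an integral kernel description, and the base-$M$ packing, stay polynomial in the input size. One must also check that the game and $y$ are passed through unchanged, so that the value oracle for $v$ is not affected.
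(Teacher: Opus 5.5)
Your proof is correct, and its main line matches the paper's. For the direction \Psubspaceavoidingminexcess{} $\to$ \Pnonzerominexcess{}, the paper also writes $L$ as the kernel of an integral matrix $A$ obtained by Gaussian elimination. It then calls the oracle once per row $a_i$ and returns the best coalition.

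Two points where you differ are worth noting.

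\emph{Single-call packing.} Your base-$M$ packing $w=\sum_i M^{i-1}a_i$ with $M=2N+1$ is sound, since the highest nonzero digit dominates all lower ones. It turns the paper's Turing reduction with $k$ oracle calls into a many-one reduction with a single call, at the cost of entries with $O(k\log M)$ bits. The paper does not need this, because polynomially many calls suffice for the nucleolus application.

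\emph{Choice of $L$.} For \Pnonzerominexcess{} $\to$ \Psubspaceavoidingminexcess{} you take the hyperplane $\{x\in\bR^P : w^\top x=0\}$. The paper instead writes $L=\mathrm{span}\{\chi^S : a(S)=0\}$. Both satisfy $\chi^S\notin L \iff a(S)\neq 0$, but only your hyperplane is always $(n-1)$-dimensional with a trivially computable basis. The span of zero-sum coalitions can be much smaller: it is $\{0\}$ for $a\equiv 1$. Even deciding whether it equals $\{0\}$ is an NP-hard zero-subset-sum question. So your formulation is the one that actually makes this direction polynomial, and matches the paper's own remark about an $(n-1)$-dimensional subspace.

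The only detail left implicit is the case $w=0$. There the \Pnonzerominexcess{} instance is simply infeasible and should be reported as such, since \Psubspaceavoidingminexcess{} requires $L\subsetneq\bR^P$.
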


This representation proves to be extremely powerful.
While linear subspace avoidance has not received much attention beyond applications in game theory, 
non-zero constraints have appeared in various fields.
For example, we can make use of previous results about non-zero matroid basis computation \cite{hoersch2024problems}
and non-zero submodular function minimization \cite{goemans1995minimizing}. 
Further, we extend the list of games for which the nucleolus
can be computed in polynomial time:
 
\begin{theorem}
	\label{thm:poly_nuc}
	The nucleolus can be computed in polynomial time for
	\begin{itemize}
		\item b-matching games with $b \leq 2$, if $b=2$ for only constantly many vertices,
		\item b-matching games with $b \leq 2$, if \textsc{Shortest Non-Zero Cycle} can be solved in polynomial time,
		\item arboricity games,
		\item network strength games.
	\end{itemize}
\end{theorem}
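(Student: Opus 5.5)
The plan is to reduce each case, via the MPS scheme and \cref{lem:equivalence_subspace_avoidance_non_zero}, to a polynomial-time algorithm for \Pnonzerominexcess{} on the respective game class. In the MPS scheme one solves a polynomial number (at most $|P|$) of linear programs. Each program maximizes the minimum excess over coalitions that are not yet fixed, meaning coalitions whose characteristic vector does not lie in the linear span of the already fixed coalitions and $\chi_P$. By the ellipsoid method, each program can be solved in polynomial time given a separation oracle, and that oracle is exactly \Psubspaceavoidingminexcess{}. Deciding which new coalitions become fixed also reduces to calls of this oracle. So it suffices to solve \Pnonzerominexcess{} in polynomial time: given $y\in\bR^P$ and integral weights $w\in\bZ^P$, find $S$ with $w(S)\neq 0$ minimizing $y(S)-v(S)$. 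Computing $w$ itself is possible because a vector orthogonal to the current subspace can be chosen integral with polynomial encoding length.

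\textbf{Arboricity and network strength games.} Here $v$ comes from a graph or matroid structure, and the excess minimization without the non-zero constraint is a submodular minimization, or a minimization over matroid bases and forests. For network strength games, the minimum excess problem has the form of minimizing a submodular function $y(S)-v(S)$. The non-zero constraint then follows the parity/congruence approach of Goemans--Ramakrishnan \cite{goemans1995minimizing}. The idea is that $w(S)\neq 0$ holds iff there is some $p$ with $w_p\neq 0$ such that the condition can be enforced by guessing elements. More concretely, I would exploit that the minimizers over a lattice family form a ring family. Fixing one element inside and one outside, or recursing on the lattice of minimizers, then gives a minimizer with $w(S)\neq0$ using polynomially many submodular minimizations. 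For arboricity games, I would write the excess as a minimization over forests or bases and invoke the non-zero matroid basis result \cite{hoersch2024problems}, which finds a minimum-weight basis $B$ with $w(B)\neq 0$ in polynomial time. This requires first rewriting $v(S)$ so that the optimal coalition is determined by a basis in a suitable (possibly truncated or unioned) matroid, with $w$ transported to the matroid elements.

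\textbf{$b$-matching games with $b\le2$.} An optimal coalition decomposes into components of a $b$-matching: edges, paths, and cycles through vertices with $b=2$. Minimum excess coalitions can be taken connected, so the problem becomes finding a path or cycle structure of minimum cost under edge weights $y$-adjusted with non-zero total $w$-label. Non-zero cycles correspond directly to \textsc{Shortest Non-Zero Cycle} in a group-labelled graph, which yields the second bullet. When only constantly many vertices have $b=2$, any such component contains only $O(1)$ of these vertices. I would enumerate them, together with their order along the path or cycle. The remaining segments lie among $b=1$ vertices and thus are single edges, so the enumeration is polynomial.

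The main obstacle I expect is the arboricity and network strength case. There the submodular or matroid structure must be set up carefully so that the excess minimization is genuinely a (non-zero-constrained) matroid basis problem or a submodular minimization on a lattice family, with the non-zero condition preserved by that transformation.
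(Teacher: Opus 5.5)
Your overall frame (MPS scheme, separation via \Psubspaceavoidingminexcess{}, reduction to \Pnonzerominexcess{}) matches the paper. Your arboricity plan also points in the paper's direction: for each $k$, maximize $y(S)$ subject to $a(S)\neq 0$ over the $k$-fold union of the graphic matroid, and obtain non-zero maximum-weight independent sets from non-zero bases of its truncations.

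\textbf{Network strength.} This case fails as proposed. The function $v$ is not supermodular, so $y(S)-v(S)$ is not submodular and the Goemans--Ramakrishnan parity-family result does not apply. In a triangle with edges $a,b,c$, the sets $S=\{a,b\}$ and $T=\{b,c\}$ each contain a spanning tree, yet $v(S\cup T)+v(S\cap T)=1+0<2=v(S)+v(T)$. For fixed $k$, the family $\{S : v(S)\geq k\}$ is not intersection-closed either, so there is no ring family to work with. The missing idea is matroid duality. Assume the bases of the $k$-fold union matroid are exactly unions of $k$ disjoint spanning trees. Then $v(S)\geq k$ iff $E\setminus S$ is independent in the dual. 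After normalizing $a(E)=0$ with an expensive dummy element, $a(S)=-a(E\setminus S)$. So a maximum $y$-weight non-zero independent set $F$ of the dual gives the optimal coalition $S=E\setminus F$.

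\textbf{$b$-matching games.} The step ``minimum excess coalitions can be taken connected'' is false once some coalitions have negative excess. This happens at ellipsoid query points, and already at the first LP optimum whenever the core is empty. For example, take $b\equiv 1$, three disjoint unit-weight edges $\{u_i,v_i\}$, $y\equiv 0$, $a(u_1)=1$, $a(u_2)=-1$, and $a\equiv 0$ elsewhere. Then $\{u_1,v_1,u_3,v_3\}$ has $a$-value $1$ and excess $-2$, whereas every connected coalition has excess at least $-1$. For the same reason, edge costs like $\frac{y(u)+y(v)}{2}-w(uv)$ need not be conservative, so they cannot be fed to \textsc{Shortest Non-Zero Cycle}. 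Path components and singletons are also not cycles at all. Your enumeration for constantly many $b=2$ vertices breaks down the same way: already for matching games ($k=0$), an optimum may consist of unboundedly many components.

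The missing idea is to work relative to an unconstrained optimum. The paper encodes the game, including the prizes $y$, by node and edge gadgets as a \textsc{Maximum Weight Non-Zero Matching} instance. It computes a maximum-weight matching $\bar M$, completed to a perfect matching by zero edges. Some optimal non-zero matching then differs from $\bar M$ by a single $\bar M$-alternating cycle. The costs $w(e)-K$ on $\bar M$ and $K-w(e)$ off $\bar M$ are conservative exactly because $\bar M$ is maximum. For the constant case, such a cycle carries at most $k+2$ non-zero edges, since gadgets of vertices with $b(v)=1$ are dead ends. These edges can be guessed and then closed up by a minimum-cost $T$-join on the zero-labelled edges, again using conservativeness.
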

The \textsc{Shortest Non-Zero Cycle} problem is closely related to the
\textsc{Shortest Odd Cycle} problem, which is conjectured to be polynomially
solvable \cite{schlotter2025odd}. Therefore, this seems like a promising
approach to compute the nucleolus of general $b$-matching games with $b \leq 2$.

Beyond these applicational results, we study linear subspace avoidance from a more general perspective.
It was known before that linear subspace avoidance constraints strictly increase the complexity of minimum excess problems in general, unless $\mathrm{P}=\mathrm{NP}$ \cite{konemann2020general}.
The reduction relies on a non-monotone game.
We show that the same complexity increase occurs for the class of monotone games, even if $\mathrm{P} = \mathrm{NP}$.
On the positive side, we provide a polynomial-time reduction from \Psubspaceavoidingminexcess{} to \Pminexcess{} that trades the linear subspace avoidance constraints against an arbitrarily small approximation error.
Unfortunately, such an error might propagate exponentially in the MPS scheme.
The nucleolus defines a continuous mapping based on the value function of a coooperative game~\cite{schmeidler1969nucleolus}. 
We show that this mapping is generally unstable, even when restricting to monotone and superadditive games, in the following sense:

\begin{restatable*}{theorem}{nucleolusunstable}
	\label{thm:nucleolus-unstable}
	Let $n \in \bN$ and $\varepsilon > 0$. There are monotone, superadditive
	games $(P,v)$ and $(P,\tilde v)$ with $\abs{P}=O(n)$ such that
	$\abs{v(S) - \tilde v(S)} \leq \varepsilon$
	for all $S\subseteq P$ but the nucleoli of $(P,v)$ and $(P,\tilde v)$
	differ by $2^n\cdot\varepsilon$ for some players.
\end{restatable*}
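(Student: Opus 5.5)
We would prove \cref{thm:nucleolus-unstable} by an explicit construction in which the MPS scheme fixes the payoffs of a chain of players one after the other, and each step doubles an initial perturbation of size $\varepsilon$. To obtain monotonicity and superadditivity for free, we define both games as \emph{packing games}. We fix a family $\cB$ of ``basic'' coalitions with non-negative weights $w_B$ and set
\[
v(S) = \max\Bigl\{\textstyle\sum_{B\in\cF} w_B : \cF\subseteq\cB \text{ pairwise disjoint},\ \bigcup\cF\subseteq S\Bigr\}.
\]
Every such $v$ is monotone, because the weights are non-negative and packings inside $S$ remain packings inside any $T \supseteq S$. It is also superadditive, because for disjoint $S,T$ the union of a packing in $S$ and a packing in $T$ is a packing in $S\cup T$. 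The game $\tilde v$ arises from $v$ by changing one or a few weights by at most $\varepsilon$. Since taking a maximum over packings is $1$-Lipschitz in each weight and a single packing can contain at most one of the perturbed basic coalitions (if we arrange these to pairwise intersect), this gives $\abs{v(S)-\tilde v(S)}\le\varepsilon$ for all $S$.

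\textbf{Gadget.} We use players $p_0,\dots,p_n$, auxiliary players $q_1,\dots,q_n$, and one large ``reservoir'' player $r$ (or a constant-size reservoir group) that absorbs the slack, so that $y(P)=v(P)$ does not interfere. The basic coalitions are chosen so that, at the $i$-th LP of the MPS scheme, the newly tight constraints force a relation of the form
\[
y_{p_i} = 2\,y_{p_{i-1}} + c_i
\]
(or $y_{p_i}-c_i = 2(y_{p_{i-1}}-c_{i-1})$). The factor $2$ arises, for instance, because the excess of a coalition containing $p_{i-1}$ and $q_i$ is balanced against that of a coalition containing $p_i$ but not $p_{i-1}$, while $q_i$ was itself fixed to depend on $p_{i-1}$ in the previous level. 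Concretely, we pick the weights with a strictly increasing sequence of levels $\lambda_0<\lambda_1<\dots<\lambda_n$ of the optimal excesses, so that the stages of the MPS scheme occur in the order $0,1,\dots,n$. The first stage fixes $y_{p_0}$ in terms of the single weight $w_{B_0}$; perturbing $w_{B_0}$ by $\varepsilon$ shifts $y_{p_0}$ by $\Theta(\varepsilon)$. By induction, $y_{p_i}$ shifts by $2^i\Theta(\varepsilon)$, and rescaling $\varepsilon$ gives the claimed difference $2^n\varepsilon$ at $p_n$ with $\abs{P}=O(n)$.

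\textbf{Verification.} We show that the nucleolus of the constructed game is the explicitly guessed vector $y^*$. Following the MPS scheme (or Kohlberg's criterion), we check two things at each level $\lambda_i$. First, the set of coalitions attaining excess $\lambda_i$ is exactly the intended pair or triple of coalitions, together with coalitions whose characteristic vectors lie in the span of earlier tight ones. Second, these coalitions are balanced, i.e.\ they admit positive combination weights as Kohlberg requires. Because $v$ is a packing game, the excess of an arbitrary $S$ is at least the excess of the best basic coalition inside it plus the non-negative payoffs of the remaining players. This reduces the check to basic coalitions and their disjoint unions.

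\textbf{Main obstacle.} The hard part is the second check: ensuring that no unintended coalition reaches a minimum excess level prematurely. A disjoint union of basic coalitions could tie, or the reservoir could produce extra tight sets, and either would destroy the intended stage order and the doubling. I would first try to separate the levels by a geometric scale, e.g.\ $\lambda_i = 4^{i}$ times a base constant, while keeping the payoff shifts of order $2^i\varepsilon\ll 4^i$. I would then choose all weights large relative to the $p$-payoffs, so that any union of two or more basic coalitions has strictly larger excess than each of its parts. If this fails, the fallback is to pass to a game on players of a bipartite-like structure, where tight sets are easier to control, and to verify Kohlberg's balancedness condition directly for $y^*$ and $\tilde y^*$.
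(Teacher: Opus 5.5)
There is a genuine gap: what you describe is a template rather than a proof. The family $\cB$, its weights, the perturbed weight, and the explicit nucleoli $y^*,\tilde y^*$ are never specified, so neither of your verification checks can be carried out. The step you yourself call the main obstacle (excluding unintended tight coalitions and keeping the stage order) is left open. For this theorem the explicit gadget \emph{is} the proof. Once a gadget is fixed, your Kohlberg/MPS-level verification would be a fine alternative to the paper's core-based argument. A smaller point: you cannot ``rescale $\varepsilon$'' at the end, since $\varepsilon$ also bounds $\abs{v(S)-\tilde v(S)}$. Instead, add a constant number of levels; the paper uses $n+2$ levels with shift $2^{l-2}\varepsilon$ at level $l$.

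More importantly, the doubling mechanism you sketch does not double. In the nucleolus, a pair whose sum is already fixed (e.g.\ by a zero-excess packing, hence for every core allocation) is split by equalizing the excess of a coalition containing one member but not the other against a coalition containing the other but not the first. Kohlberg's criterion forces this balance to be two-sided, and a coalition whose members' payoffs are all already determined cannot serve as a counterweight at a later stage. Such a balancing step \emph{halves} the imported perturbation. Suppose the split of $(p,q)$ is decided by $\{q\}$ versus $M\ni p$, $q\notin M$, where $M$ contains $m$ previously fixed players each shifted by $-\delta$. Then $-\delta'=-m\delta+\delta'$, i.e.\ $\delta'=m\delta/2$. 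Your coalition containing $p_{i-1}$ and $q_i$ imports at most two shifts; if $q_i$ merely copies the shift of $p_{i-1}$, you get $\delta_i=\delta_{i-1}$, not $2\delta_{i-1}$. A single chain $p_0,\dots,p_n$, $q_1,\dots,q_n$ with one auxiliary per level does not supply the needed multiplicity.

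The missing idea is to import four shifts per step. The paper uses levels $l=1,\dots,n+2$, each with four parallel pairs $\{p^{(l)}_i,q^{(l)}_i\}$ of weight $2lK$. Consecutive levels are linked by $M^{(l)}_i=\{q^{(l-1)}_1,\dots,q^{(l-1)}_4,p^{(l)}_i\}$ of weight $4(l-1)K$. The perturbation enters via $w(\{r\})=1$ versus $1-\varepsilon$, together with the sets $\{r,p^{(1)}_i\}$ of weight $1$. The zero-excess sets partition $P$, so by packing LP duality every core allocation satisfies $z(p^{(l)}_i)+z(q^{(l)}_i)=2lK$. Balancing $\{q^{(l)}_i\}$ against $M^{(l)}_i$ (resp.\ $\{r,p^{(1)}_i\}$) then gives shifts $\pm 2^{l-2}\varepsilon$. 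Taking $K$ large compared to $2^n\varepsilon$ keeps the levels $lK-2^{l-2}\varepsilon$ in order.
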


\subsection{Related Work}

Since its definition by Schmeidler in 1969 \cite{schmeidler1969nucleolus}, 
the nucleolus and its computability have received much attention in the context of cooperative game theory.
The first positive result, i.e., an algorithm that computes the nucleolus in polynomial time, 
was attained by \textcite{littlechild1974simple} for minimum cost spanning tree games on line graphs.
More games followed, e.g., convex games \cite{faigle2001computation} 
and spanning connectivity games \cite{aziz2009wiretapping}.
On the other hand, the first negative result, i.e., a proof of NP-hardness of nucleolus computation, 
was attained by \textcite{faigle1998note} for general minimum cost spanning tree games.
Over the years, both lists have grown and are still extended.

We contribute positive results for arboricity games, network strength games, and certain $b$-matching games.
For both arboricity and network strength games, the nucleolus was previously known to be computable in polynomial time in case of a non-empty core 
due to \textcite{xiao2023arboricity} and \textcite{baiou2020network}, respectively.
We will show that the same is possible for all instances. 
Note that moving from instances with non-empty core
to general instances is often non-trivial \cite{blauth2024cost,konemann2020computing,pashkovich2022computing}.
For $b$-matching games, nucleolus computation was shown to be NP-hard by \textcite{konemann2024complexity},
even for bipartite graphs and $b \leq 3$.
The general case of $b \leq 2$ is still open.
Certain special cases were solved in \cite{konemann2024complexity}.
We will provide a detailed overview over past work in \cref{sec:b-matching}.

Next to results for specific games, there are also structural findings
such as the MPS scheme for nucleolus computation by \textcite{kopelowitz1967computation} and \textcite{maschler1979geometric},
and a framework by \textcite{konemann2020general}
that requires a certain structure of a game.
Part of our work follows this line of research.
We will provide a detailed discussion comparing our work with \cite{konemann2020general} in \cref{subsec:congruency}.
Stability of the nucleolus has not been studied thoroughly, but an
example of \textcite{kumabe2024lipschitz} implies that changing the value
function of a game by a constant may change the nucleolus by $\Omega(\abs{P})$. We show
that this change can even be exponential in $\abs{P}$.

Finally, we want to mention that all our results carry over
to cooperative games defined by a \emph{cost} function (instead of a \emph{value} function), as well as
related solution concepts which also lexicographically maximize the excess vector,
but require a different total value $y(P)$
\cite{blauth2024cost,aziz2010monotone,meir2011subsidies,engevall2004heterogeneous}.

\section{Nucleolus Computation by Constrained Optimization}
\label{sec:separation}

In this section, we study separation over the polytopes of the classical Maschler-Peleg-Shapley (MPS) scheme for nucleolus computation,
which gives rise to the following problem:

\begin{problem}[\Psubspaceavoidingminexcess]
	Given a cooperative game $(P,v)$, a cost allocation $y \in \bR^P$,
	and a linear subspace $L \subsetneq \bR^P$ (described by a basis),
	find a coalition $S\subseteq P$ with $S \notin L$ that minimizes $y(S)-v(S)$.
\end{problem}

When writing $S \notin L$, we identify coalitions with their incidence vectors.
We show that  linear subspace avoidance is equivalent to non-zero constraints: 

\begin{problem}[\Pnonzerominexcess]
	\label{problem:min-excess-non-zero-coalition}
	Given a cooperative game $(P,v)$, a cost allocation $y \in \bR^P$,
	and $a \in \bZ^P$, find a coalition $S\subseteq P$ with
	$a(S)\neq0$ that minimizes $y(S) - v(S)$.
\end{problem}

We will compare this formulation to previous work by \textcite{konemann2020general} that relied on congruency constraints, and show that the latter are strictly harder than linear subspace avoidance (and therefore also than non-zero constraints).

\subsection{MPS Scheme}
\label{subsec:mps}

As shown by \textcite{maschler1979geometric}, the nucleolus can be computed by solving a sequence of linear programs. 
An improved variant which needs at most a linear number of iterations was proposed by \textcite{solymosi1993computing},
see \cref{fig:mps}.

\begin{figure}[ht]
	\centering
	\scalebox{0.8}{
	\begin{tcolorbox}[
		colback=gray!10,
		halign=left,
		rounded corners,
		colframe=gray!50,
		boxrule=1.5pt,
		left=25pt
		]
		\begin{itemize}
			\item[\textbf{Step 1:}] Let $\cS_\text{fixed} \gets \emptyset$.\\[.2cm]
			\item[\textbf{Step 2:}] Compute a pair of optimum primal and dual solutions $((\xi, y), z)$ to \eqref{eq:mps-scheme-lp}, where $\mathrm{span}(\cS_\text{fixed})$ denotes the linear subspace of $\bR^P$ generated by the incidence vectors of coalitions in $\cS_\text{fixed}$.
			\begin{equation}
				\small 
				\begin{alignedat}{3}
					&\max\mathrlap{\ \xi} \\
					&\ \text{s.t.}\quad& y(S)&\ = \ v(S) + \xi^*_S
					\qquad &&\text{for} \ S\in\cS_\text{fixed} \\
					&&y(S) &\ \geq \ v(S) + \xi
					\qquad &&\text{for} \ S\in 2^P\setminus
					\mathrm{span}(\cS_\text{fixed}) \\
					&&y(P)&\ = \ v(P)\\
					&&\xi &\ \in \ \bR \\
					&&y &\ \in \ \bR^P
				\end{alignedat}\label{eq:mps-scheme-lp} \tag{LP}
			\end{equation}
			\item[\textbf{Step 3:}] For each $S\in 2^P\setminus\mathrm{span}(\cS_\text{fixed})$ with $z_S \neq 0$, $\cS_\text{fixed} \gets \cS_\text{fixed} \cup \{S\}$, $\xi^*_S \gets \xi$.\\[.2cm]
			\item[\textbf{Step 4:}] Repeat Steps 2 and 3 until $\mathrm{span}(\cS_\text{fixed})=\bR^P$.
			Return $y$.
		\end{itemize}
	\end{tcolorbox}
	}
	\caption{MPS scheme}
	\label{fig:mps}
\end{figure}

In order to apply this scheme, we need to be able to solve the linear program (\ref{eq:mps-scheme-lp}).
All other steps are trivial. 
If the constraints 
\[
y(S) \ \geq \ v(S) + \xi \qquad \text{for } \ S\in 2^P\setminus \mathrm{span}(\cS_\text{fixed})
\] 
can be separated efficiently, 
then we can also solve (\ref{eq:mps-scheme-lp}) and its dual \cite{grotschel1981ellipsoid}, 
and therefore compute the nucleolus.
This motivates the following well-known result:

\begin{proposition}
	\label{prop:lsa_for_nuc_poly_time}
	If we can solve \Psubspaceavoidingminexcess{} for a cooperative game $(P,v)$  in polynomial time (in $|P|$ and the representation size of $v$), then the nucleolus of $(P,v)$
	can also be computed in polynomial time. 
	\hfill \qed 
\end{proposition}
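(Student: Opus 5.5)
We would prove \cref{prop:lsa_for_nuc_poly_time} by running the MPS scheme of \cref{fig:mps} with the ellipsoid method. The separation oracle needed for the ellipsoid method is supplied by the assumed \Psubspaceavoidingminexcess{} algorithm. The argument has three parts: bound the number of iterations, solve each \eqref{eq:mps-scheme-lp} together with a usable dual in polynomial time, and control the encoding sizes of all numbers that appear.

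\textbf{Iterations.} In the variant of \textcite{solymosi1993computing}, every execution of Step~3 adds at least one coalition $S\notin\mathrm{span}(\cS_\text{fixed})$, because some dual variable $z_S$ with $S\notin \mathrm{span}(\cS_\text{fixed})$ must be non-zero; otherwise $\xi$ would be unbounded. Hence the dimension of $\mathrm{span}(\cS_\text{fixed})$ strictly increases, and there are at most $\abs{P}$ iterations. This also keeps $\cS_\text{fixed}$ polynomially small: we may keep only a linearly independent subfamily, so at most $\abs{P}$ equality constraints are ever stored, each with its value $\xi^*_S$.

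\textbf{One iteration.} \eqref{eq:mps-scheme-lp} has $\abs{P}+1$ variables and polynomially many explicit constraints. The remaining exponentially many constraints $y(S)\ge v(S)+\xi$ for $S\notin L\coloneqq \mathrm{span}(\cS_\text{fixed})$ are separated as follows. Given $(\xi,y)$, compute a basis of $L$, which takes polynomial time by Gaussian elimination on the stored coalitions. Then call the oracle to obtain $S\notin L$ minimizing $y(S)-v(S)$. Either $y(S)-v(S)\ge \xi$, or $S$ gives a violated inequality. Since $L\subsetneq\bR^P$ as long as the scheme has not terminated, the oracle is well defined. By \textcite{grotschel1981ellipsoid}, polynomial-time separation yields polynomial-time optimization of \eqref{eq:mps-scheme-lp}. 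It also yields an optimal dual solution $z$ supported on polynomially many constraints, namely those generated during the run, and a vertex optimal primal solution. This handles Steps~2 and~3.

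\textbf{Main obstacle.} The delicate point is the encoding length. The ellipsoid method runs in time polynomial in the facet complexity, and the right-hand sides $v(S)+\xi^*_S$ feed back into later LPs. We would argue as follows. Each optimal value $\xi$ is a vertex coordinate of a system whose rows are $0/1$ incidence vectors together with the $\xi$-column. By Cramer's rule, $\xi$ is a ratio of determinants of size polynomial in $\abs{P}$ and in the sizes of the $v(S)$ and earlier $\xi^*$ values. To avoid compounding across iterations, we would instead express every fixed $\xi^*_S$ as a solution of one combined linear system over the original data. This works because the final nucleolus is the unique solution of $y(S)=v(S)+\xi^*_S$ for the fixed coalitions, and all $\xi^*$ arise from vertices of a single polyhedron in the variables $(y,\xi_1,\dots,\xi_{\abs{P}})$ with $0/\pm1$ coefficients. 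Their sizes are therefore polynomially bounded in $\abs{P}$ and the size of $v$, uniformly over the iterations. With these size bounds, every ellipsoid call is polynomial, and after at most $\abs{P}$ iterations the returned $y$ is the nucleolus.
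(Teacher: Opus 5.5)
Your overall route is the paper's. It uses the Solymosi--Raghavan variant of the MPS scheme with at most $\abs{P}$ rounds, and solves each LP together with a sparse optimal dual by the ellipsoid method, with the \Psubspaceavoidingminexcess{} oracle for $L=\mathrm{span}(\cS_\text{fixed})$ as separation oracle. The paper states this as well known and says nothing about encoding lengths. Your round count, separation step, and pruning of $\cS_\text{fixed}$ are all fine.

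What is not established is the encoding-length step, which you yourself call the main obstacle. The concern is real: the generic vertex-complexity bound multiplies the bit size by a polynomial factor per round, which is exponential after $\abs{P}$ rounds. But your resolution is only asserted.
\begin{itemize}
\item That the nucleolus is the unique solution of $y(S)=v(S)+\xi^*_S$ treats the $\xi^*_S$ as given data, so it bounds nothing.
\item The ``single polyhedron'' in $(y,\xi_1,\dots,\xi_{\abs{P}})$ is never specified.
\item If the fixed family is pruned to a linearly independent one as you propose, the system $y(S)-\xi_{t(S)}=v(S)$, $y(P)=v(P)$ has at most $\abs{P}+1$ equations in $\abs{P}+k$ unknowns, where $k$ is the number of rounds. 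So once $k\geq 2$ it cannot determine the $\xi_t$.
\end{itemize}

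A clean fix is as follows. For every round $t$, fix a nonsingular subsystem of $\abs{P}+1$ constraints that are tight at an optimal vertex $(y_t,\xi_t)$. Give each round its own copy $(y^{(t)},\xi_t)$ of the variables, and rewrite each fixed equality $y(S)=v(S)+\xi_s$ used in round $t$ as $y^{(t)}(S)-\xi_s=v(S)$. Stacking all rounds yields a block lower-triangular system of order at most $\abs{P}(\abs{P}+1)$. Its entries lie in $\{0,\pm1\}$, its diagonal blocks are nonsingular, and its right-hand sides are only values $v(S)$. Cramer's rule then bounds the encoding length of every $\xi_t$ by a single polynomial in $\abs{P}$ and the encoding length of the values $v(S)$, with no compounding.

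Alternatively, keep the full dual supports $\{S : z_S>0\}$ of all rounds in your single-$y$ system. The dual constraints $\sum_S z_S=1$ and $\sum_S z_S\chi^S\in\mathrm{span}(\cS_\text{fixed}\cup\{P\})$ then show inductively that this system determines each $\xi_t$.
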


\subsection{Introducing Non-Zero Constraints}

We will reformulate linear subspace avoidance to a non-zero constraint, 
which we believe easier to handle and think about.
Such a non-zero constraint already appeared in optimization outside of game theory
\cite{hoersch2024problems,kobayashi2017finding,thomas2023packing}.

\lsanzequivalence

\begin{proof}
	For a vector $a \in \bZ^P$ that is not $0$ everywhere, 
	\[
	L \ = \ \mathrm{span}(\{S\subseteq P \ :\  a(S)=0\})
	\] 
	defines a proper linear subspace of $\bR^P$, 
	so finding a non-zero coalition is a special case of linear subspace avoidance, 
	namely the special case of an $(n-1)$-dimensional linear subspace. 
	Transforming $L$ into basis representation can be done in polynomial time
	by Gaussian elimination.
	
	Given a general linear subspace $L$, represented by a basis, this can be expressed by the null space of a matrix $A \in \bQ^{k\times|P|}$, i.e.,
	\[
	L \ = \ \{x \in \bR^P \ : \ \langle a_i \, , \, x \rangle = 0\quad \forall i=1,\ldots,k\} \enspace ,
	\]
	where $a_i$ are the rows of $A$ and $k=\abs{P}-\dim L$. 
	By scaling everything up, we can additionally require $a_i \in \bZ^P$. 
	It then suffices to compute a minimum-excess coalition $S_i \subseteq P$ with $a_i(S_i)\neq 0$ for all
	$i=1,\ldots,k$, and choose the best one.
	The matrix $A$ can be computed in polynomial time by Gaussian elimination.
	\qed
\end{proof}

\subsection{Congruency-Constrained Optimization}
\label{subsec:congruency}

\textcite{konemann2020general} also considered \Psubspaceavoidingminexcess{}
to compute the nucleolus. They proved that it can be reduced to
a congruency-constrained problem. Additionally, if an optimization problem
can be solved via dynamic programming, then so can its congruency-constrained
variant.
We restate the first result here, since it
will be useful for the proof of \cref{thm:b-matching-randomized} in \cref{sec:b-matching}.

\begin{lemma}[\textcite{konemann2020general}]\label{lemma:congruency-min-excess-coalition}
	Let $M$ be a finite set, $f: 2^M \to \bR$, and $L \subseteq \bR^M$
	a linear subspace (given by a basis). Then, $S \subseteq M$ with
	$S \notin L$ that maximizes $f(S)$ can be computed by solving a polynomial
	(in $\abs{M}$) number of problems of the form
	\[
	\max \left\{ f(S) \ : \ a(S) \equiv k \mod p \right\} \enspace ,
	\]
	where $p$ is a polynomially bounded prime, $a \in (\bZ_p)^M$, and
	$k \in \bZ_p$.
	\hfill \qed 
\end{lemma}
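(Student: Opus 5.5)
The lemma reduces linear subspace avoidance to congruency constraints, so the plan is to go through \Pnonzerominexcess{} first and then replace ``$\neq 0$'' by a bounded number of congruences.

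\textbf{Step 1: reduce to non-zero constraints.} By \cref{lem:equivalence_subspace_avoidance_non_zero}, specifically the second half of its proof, $L$ can be written as the common null space of integral vectors $a_1,\dots,a_k\in\bZ^M$ with $k\le\abs{M}$. A set $S$ lies outside $L$ if and only if $a_i(S)\neq 0$ for some $i$. Hence it suffices, for each fixed $i$, to maximize $f(S)$ subject to $a_i(S)\neq 0$, and then to return the best of the $k$ answers. Gaussian elimination keeps the entries of $a_i$ polynomially bounded in bit size.

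\textbf{Step 2: replace the non-zero constraint by congruences.} Fix $a=a_i$. Every $S$ satisfies
\[
\abs{a(S)}\le\sum_{m\in M}\abs{a_m}=:B.
\]
Here $\log B$ is polynomial in the input size, although $B$ itself may be exponential.

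Choose distinct primes $p_1,\dots,p_r$ with product exceeding $B$. Taking the first $r=O(\log B)$ primes works, and by the prime number theorem each of them has size $O(\log B\log\log B)$. These primes are therefore polynomially bounded and can be found by trial division.

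Suppose $a(S)\neq 0$. Since $0<\abs{a(S)}\le B<\prod_j p_j$, the number $a(S)$ is not divisible by all the $p_j$. So $a(S)\not\equiv 0 \pmod{p_j}$ for some $j$. Conversely, if $a(S)\not\equiv 0\pmod{p_j}$ for some $j$, then certainly $a(S)\neq 0$. This gives
\[
\{S: a(S)\neq 0\}=\bigcup_{j=1}^r\ \bigcup_{\ell=1}^{p_j-1}\{S:\ a(S)\equiv \ell \pmod{p_j}\}.
\]

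We then solve $\max\{f(S): (a \bmod p_j)(S)\equiv \ell \bmod p_j\}$ for every pair $(j,\ell)$ and return the best solution. Over all $i$, the total number of subproblems is at most
\[
k\cdot\sum_j p_j = \abs{M}\cdot O(\log^2 B\,\log\log B),
\]
which is polynomial.

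\textbf{Main obstacle.} The delicate point is that ``polynomial in $\abs{M}$'' must absorb $\log B$. The quantity $\log B$ depends on the bit length of the basis of $L$, not only on $\abs{M}$.

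To handle this, I would first argue that we may assume each $a_i$ has entries bounded by $\abs{M}^{O(\abs{M})}$. The reason is that only the coalition structure matters. Replacing $L$ by $\mathrm{span}(\{S: S\in L\})$, i.e., the span of the $0/1$ vectors lying in $L$, does not change which sets avoid $L$. After this replacement, a Cramer's-rule argument on $0/1$ matrices (Hadamard's bound) yields integral normal vectors with entries at most $\abs{M}^{\abs{M}/2}$.

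This gives $\log B=O(\abs{M}\log\abs{M})$, so both the number of primes and their sizes are polynomial in $\abs{M}$. Computing this $0/1$ span directly might itself be hard. However, the existence of such a bounded normal description is all the counting argument needs, provided the lemma's polynomial bound is read as also allowing dependence on the encoding length of the input. I expect this size bookkeeping to be the step needing the most care.
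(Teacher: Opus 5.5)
Your Steps 1 and 2 are correct, and they are essentially the standard argument behind this lemma. The paper does not prove it; it only cites \textcite{konemann2020general}. You write $L$ as the common null space of integral vectors $a_1,\dots,a_k$. You then use that a non-zero integer of absolute value at most $B$ is non-zero modulo one of the first $O(\log B)$ primes. Hence $\{S : S\notin L\}$ is exactly the union of the classes $\{S : a_i(S)\equiv \ell \bmod p_j\}$ with $\ell\neq 0$, and each class contains only sets outside $L$.

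The only thing to fix is your last paragraph. You claim that the mere \emph{existence} of small integral normals for $L'=\mathrm{span}(L\cap\{0,1\}^M)$ is ``all the counting argument needs''; this is wrong. To pose the subproblems, the reduction must write down the vectors $a \bmod p$, so it needs these normals explicitly. Computing $L'$ from an arbitrary basis would in particular decide whether $L$ contains a non-zero $0/1$ vector. For $L=a^\perp$, this asks whether some non-empty $S$ has $a(S)=0$, which is NP-hard.

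Under your proviso the detour is unnecessary anyway. With the Gaussian-elimination normals of $L$ itself, the number and size of the primes are polynomial in the encoding length. That suffices for the paper's use of the lemma in \cref{thm:b-matching-randomized}.

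If you want a bound in $\abs{M}$ alone, use the setting the lemma is designed for. In the MPS scheme, $L=\mathrm{span}(\cS_{\text{fixed}})$ comes with a basis of incidence vectors. The Cramer's-rule normals of that given basis are $d\times d$ minors of a $0/1$ matrix. By Hadamard's bound they are at most $\abs{M}^{\abs{M}/2}$, and they are computable in polynomial time. This gives $\log B=O(\abs{M}\log\abs{M})$ and primes of size $O(\abs{M}\log^2\abs{M})$. The number of subproblems is then polynomial in $\abs{M}$, with no need to pass through $L'$.
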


While the constraint $a(S) \equiv k \mod p$ may seem similar to the
constraint $a(S) \neq 0$ of \Pnonzerominexcess{}, it turns out that
congruency constraints are strictly harder in general: Given a matroid on
a ground set $E$, $a \in \bZ^E$, and a target number $k \in \bZ$, finding
a basis $B$ of the matroid with $a(B)=k$ is impossible with only a
polynomial number of calls to an independence oracle \cite{doron2026lower}.
This is still true if $a$ and $k$ are polynomially bounded and we allow
for randomized algorithms. Note that this corresponds to the special case
of a congruency-constrained problem where $p$ is much larger than $a$.
In contrast, finding a basis (or an arbitrary independent set)
$B$ with $a(B) \neq 0$ can be done
in oracle polynomial time \cite{hoersch2024problems}. One can even find
such a basis (or independent set) with maximum weight. This fact is the
foundation for our results on games with a matroid structure in \cref{sec:arboricity-and-network-strength}. 
It further shows that, while \Psubspaceavoidingminexcess{}
and \Pnonzerominexcess{} are equivalent, the same is not true for the
congruency-constrained variant of \textcite{konemann2020general}, even when restricting
to monotone games.
For this reason, we focus our attention on the \Pnonzerominexcess{} problem.

\section{Application to Specific Classes of Games}

In this section, we focus on \Psubspaceavoidingminexcess{} for specific classes of games.
Our proofs  rely on the equivalent formulation as \Pnonzerominexcess{}, cf.\ \cref{lem:equivalence_subspace_avoidance_non_zero}.
Recall that if we can solve \Psubspaceavoidingminexcess{} in polynomial time, we can also compute the nucleolus in polynomial time by \cref{prop:lsa_for_nuc_poly_time}.
Our positive results are summarized in \cref{thm:poly_nuc}, which is a consequence of \cref{thm:equiv-matching-problems,thm:lsa_b-matching_constant_2,thm:matroid-games-subspace-avoidance}.

Further, we can recover previous positive results on nucleolus computability:
\Psubspaceavoidingminexcess{} is solvable in polynomial time for \emph{convex games} and \emph{spanning connectivity games}.
For both, polynomial time computability of the nucleolus was known before, but derived differently.
Convex games are characterized by a supermodular value function.
We can solve \Pnonzerominexcess{}
due to a result by
\textcite{goemans1995minimizing} on submodular functions, and the fact
that non-zero coalitions constitute what they call a \emph{parity family}.
Spanning connectivity games \cite{aziz2009power} are based on the graphic matroid, and fall under \cref{thm:matroid-games-subspace-avoidance}.
\cref{fig:table_games} summarizes our findings. 

\begin{figure}
	\newcommand{\newresult}{\textbf{\textcolor{blue!80!black}{this paper}}}
	\newcommand{\knownresult}{known}
	\begin{center}
		\scalebox{.8}
		{
			\onehalfspacing
			\begin{tabular}{!{\vrule width 1.1pt}c!{\vrule width 1.1pt}c|c!{\vrule width 1.1pt}}
				\noalign{\hrule height 1.1pt}
				\rowcolor{gray!20} & \ \textbf{\Psubspaceavoidingminexcess{}} \ & \ \textbf{\textsc{Nucleolus}} \ \\ \noalign{\hrule height 1.1pt}
				\cellcolor{gray!20} Matching \ & \newresult{} & \knownresult{} \cite{konemann2020computing} \\
				\cellcolor{gray!20} $b$-Matching, $b \leq 2$, $b=2$ constantly often \ & \newresult{} & \newresult{} \\
				\cellcolor{gray!20} Arboricity & \newresult{} & \newresult{} \\
				\cellcolor{gray!20} Network Strength & \newresult{} & \newresult{} \\
				\cellcolor{gray!20} Spanning Connectivity  & \newresult{} & \knownresult{} \cite{aziz2009wiretapping} \\
				\cellcolor{gray!20} Convex & \newresult{} & \knownresult{} \cite{faigle2001computation} \\
				\noalign{\hrule height 1.1pt}
			\end{tabular} 
		}
		\caption{Games for which \Psubspaceavoidingminexcess{} and \textsc{Nucleolus} can be solved in polynomial time.
		Results of this paper are highlighted.
		}
		\label{fig:table_games}
	\end{center}
\end{figure}

\pagebreak

\subsection{$b$-Matching Games}
\label{sec:b-matching}

In the following part, we will study \Psubspaceavoidingminexcess{} for $b$-matching
games:

\begin{definition}
	\sloppy
	Let $G=(V,E)$ be an undirected graph with edge
	weights $w \in \mathbb{R}^E$ and vertex values $b \in \mathbb{N}^V$.
	The \emph{$b$-matching game} associated with $(G,w,b)$ is the value game
	$(P,v)$ with $P\coloneqq V$.
	For $S \subseteq V$,
	\[ 
	v(S) \ \coloneqq \ \max\left\{w(M) \ : \ M\subseteq E(G[S]) \ , \ \abs{\delta(v) \cap M} \leq b(v) \ \ \forall v\in V\right\} 
	\]
	is the maximum weight of a $b$-matching in the induced subgraph $G[S]$.
\end{definition}

It is known that computing the nucleolus of $b$-matching games is NP-hard even for
bipartite graphs and $b\leq 3$ \cite{konemann2024complexity}.
Thus, we focus on $b \leq 2$, where the nucleolus
is so far known to be computable in polynomial time for certain instances on
bipartite graphs \cite{konemann2024complexity}.
We show that \Psubspaceavoidingminexcess{} for $b$-matching games with $b \leq 2$
is equivalent to \textsc{Shortest Non-Zero Cycle}:

\begin{problem}[\textsc{Shortest Non-Zero Cycle}]
	\sloppy
	Given an undirected graph \mbox{$G=(V,E)$} with conservative edge costs
	$c \in \bR^E$ and $a \in \bZ^E$, compute a cycle $C$ with $a(C)\neq0$ that minimizes $c(C)$.
\end{problem}

This is closely related to \textsc{Shortest Odd Cycle},
where one asks for a cycle with odd (instead of non-zero) total value.
In fact, both problems are a special case of a non-zero constraint in a commutative group,
for the groups $\bZ$ and $\bZ_2$. Problems with a non-zero constraint in
a group have been studied e.g.\ for shortest path
\cite{kobayashi2017finding,iwata2022finding} and matroid problems \cite{hoersch2024problems}.

\textcite{schlotter2025odd} showed that \textsc{Shortest Odd Cycle} is NP-hard if one additionally
requires the cycle to contain a specified vertex.
Their construction can trivially be adapted to show that the same is true
for \textsc{Shortest Non-Zero Cycle}. However, the complexity of both
problems without this constraint is still open. \textcite{schlotter2025odd}
conjecture that \textsc{Shortest Odd Cycle} can be solved in polynomial time.
Their main reason for this is the fact that it can be solved in randomized
polynomial time for polynomially bounded weights, which is also true for
\textsc{Shortest Non-Zero Cycle}, see \cref{thm:b-matching-randomized,thm:equiv-matching-problems}.
Therefore, it seems likely that \textsc{Shortest Non-Zero Cycle} is
solvable in polynomial time, which would immediately imply an algorithm
to compute the nucleolus of general $b$-matching games with $b\leq 2$ in polynomial time.

Finally, we show that \Psubspaceavoidingminexcess{}
for $b$-matching games with $b\leq 2$ can be solved in some new special cases. 
If the weights and cost allocation are polynomially
bounded, then the problem can be solved in randomized polynomial time.
Additionally, we show that it can be solved in polynomial time if
$b=2$ only for a constant number of vertices. For this class of games,
the nucleolus was previously not known to be computable in polynomial time.

\begin{problem}[\textsc{Maximum Weight Non-Zero Matching}]
	Given an undirected graph $G=(V,E)$ with edge weights $w \in \bR^E$ and
	$a \in \bZ^E$, compute a matching $M$ with $a(M) \neq 0$ that maximizes $w(M)$.
\end{problem}

\begin{theorem}\label{thm:equiv-matching-problems}
	The following problems are polynomially equivalent:
	\begin{enumerate}[label=(\roman*)]
		\item\label{thm-item:equiv-b-matching}
		\Psubspaceavoidingminexcess{} 
		for $b$-matching games with $b\leq2$
		\item\label{thm-item:equiv-non-zero-matching}
		\textsc{Maximum Weight Non-Zero Matching}
		\item\label{thm-item:equiv-non-zero-cycle}
		\textsc{Shortest Non-Zero Cycle}
	\end{enumerate}
	Additionally, this equivalence also holds when restricting to instances
	with polynomially bounded weights. More precisely, we require
	$w$ and $y$ (for problem \ref{thm-item:equiv-b-matching}),
	$w$ (for problem \ref{thm-item:equiv-non-zero-matching}),
	and $c$ (for problem \ref{thm-item:equiv-non-zero-cycle})
	to be polynomially bounded.
\end{theorem}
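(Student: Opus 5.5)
We prove the equivalences by a cycle of reductions \ref{thm-item:equiv-b-matching}$\to$\ref{thm-item:equiv-non-zero-matching}$\to$\ref{thm-item:equiv-non-zero-cycle}$\to$\ref{thm-item:equiv-b-matching}. Every gadget will use only the input weights, sums of polynomially many of them, and $0/\pm$ copies of them. This keeps polynomially bounded weights polynomially bounded, which gives the second part of the statement. By \cref{lem:equivalence_subspace_avoidance_non_zero} we may replace \ref{thm-item:equiv-b-matching} by \Pnonzerominexcess{} for $b$-matching games with $b\le 2$. Minimizing $y(S)-v(S)$ over $S$ with $a(S)\neq 0$ is the same as maximizing $w(M)-y(S)$ over pairs $(S,M)$, where $M$ is a $b$-matching in $G[S]$ and $a(S)\neq 0$.

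\textbf{\ref{thm-item:equiv-b-matching}$\to$\ref{thm-item:equiv-non-zero-matching}.} We build a matching instance $H$ as follows.
\begin{enumerate}
\item For every vertex $v$ we create $b(v)$ copies $v^1,\dots,v^{b(v)}$ together with a \emph{deselect} structure. If $b(v)=2$ this is the edge $v^1v^2$. If $b(v)=1$ it is a pendant vertex $\bar v$ with edge $v^1\bar v$. The deselect edge gets weight $y_v$ and $a$-value $-a_v$; using it means $v\notin S$.
\item Every edge $uv\in E$ is replaced by a small gadget. It is the path $u^i - e_u - e_v - v^j$, with all copies of $u$ attached to $e_u$ and all copies of $v$ attached to $e_v$, weight $w_{uv}/2$ on the attaching edges, and weight $0$ and $a$-value $0$ on $e_ue_v$. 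This gadget lets $uv$ be used at most once and charges one unit of capacity at each endpoint.
\item We add one extra edge $xy$ with weight larger than the total absolute weight and $a$-value $a(V)$, so that it lies in every optimal matching.
\end{enumerate}
Up to the additive constant $-y(V)+w_{xy}$, matchings of $H$ then correspond to pairs $(S,M)$ with objective $w(M)-y(S)$, and $a$ of the matching equals $a(V)-a(V\setminus S)=a(S)$. Hence a maximum weight non-zero matching of $H$ yields the desired coalition.

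\textbf{\ref{thm-item:equiv-non-zero-matching}$\to$\ref{thm-item:equiv-non-zero-cycle}.} First compute a maximum weight matching $M^*$; if $a(M^*)\neq 0$ we are done. Otherwise, for any non-zero matching $M$, the set $M\triangle M^*$ splits into alternating paths and cycles. The values $a(M)-a(M^*)$ add up over these components, so some component $K$ has non-zero $a$-contribution. By optimality of $M^*$, each component has non-positive weight change. Therefore $M^*\triangle K$ is non-zero and at least as good as $M$, and it suffices to find the best single alternating component with non-zero $a$-contribution. To encode this as \textsc{Shortest Non-Zero Cycle}:
\begin{enumerate}
\item Split each vertex $v$ into $v_{\mathrm{in}}$ and $v_{\mathrm{out}}$ so that every cycle in the new graph is forced to alternate between $M^*$-edges and non-$M^*$-edges.
\item Give $M^*$-edges cost $w$ and $a$-value $-a$, and give the other edges cost $-w$ and $a$-value $a$.
\item Add an apex vertex joined by zero-cost edges to the $M^*$-exposed vertices, so that alternating paths close up into cycles.
\end{enumerate}
Conservativeness of the costs then follows exactly from the maximality of $M^*$, since a negative cycle would be an improving alternating component.

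\textbf{\ref{thm-item:equiv-non-zero-cycle}$\to$\ref{thm-item:equiv-b-matching}.} Given $G$, $c$ and $a$, we define a $b$-matching game on $G$ with $b\equiv 2$ and weights $w=K-c$ for a large constant $K$. We choose $y$ so that vertices pay roughly $K$, which makes every path and isolated vertex in a $2$-matching strictly unprofitable, while cycles have excess $c(C)$ up to an additive constant. The edge values $a\in\bZ^E$ must become vertex values. For this we subdivide each edge $e$ by a new vertex $t_e$ with $b(t_e)=2$, give $t_e$ the value $a_e$, give original vertices value $0$, and split $c_e$ between the two half-edges. A minimum non-zero excess coalition is then (after the standard argument that an optimal coalition can be taken to be a single cycle together with its matching) a shortest non-zero cycle.

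The step we expect to be the main obstacle is \ref{thm-item:equiv-non-zero-matching}$\to$\ref{thm-item:equiv-non-zero-cycle}. Two things need care there: the vertex-splitting must simultaneously guarantee that every cycle is alternating and that the cost function is conservative, and the decomposition argument must show that restricting to a single alternating component loses nothing. In \ref{thm-item:equiv-non-zero-cycle}$\to$\ref{thm-item:equiv-b-matching}, a similar but easier point is to calibrate $K$ and $y$ so that an optimal coalition is a single cycle while the weights stay polynomially bounded.
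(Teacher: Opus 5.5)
Your third reduction, (iii)$\to$(i), is essentially the paper's and is fine, as are your node gadgets and the heavy edge $xy$. However, the edge gadget in (i)$\to$(ii) has a genuine gap: it does not force an edge to be used fully or not at all. Since $e_ue_v$ has weight $0$, a matching may contain a single attaching edge $u^ie_u$ and leave $e_v$ exposed. This ``half edge'' earns $w_{uv}/2$, uses capacity only at $u$, and is available even when $v$ is deselected, so matchings of $H$ do not correspond to pairs $(S,M)$. Concretely, take $V=\{u,v,z\}$, $E=\{uv\}$, $b\equiv 1$, $w_{uv}=100$, $(y_u,y_v,y_z)=(10,1000,5)$ and $(a_u,a_v,a_z)=(1,0,1)$. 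The optimum non-zero coalition is $\{z\}$, with excess $5$. The matching $\{xy,\ v^1\bar v,\ z^1\bar z,\ u^1e_u\}$ has weight $w_{xy}+1055$ and $a$-value $1$. This beats every matching encoding $S=\{z\}$ (weight at most $w_{xy}+1010$), and it decodes to $S=\{u\}$ with excess $10$. The paper prevents this with a large offset $K$: the middle edge gets weight $K$ and each attaching edge $(K+w_{uv})/2$. Then an optimal matching collects about $K$ per gadget, and a half edge loses about $K/2$.

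In (ii)$\to$(iii), splitting $v$ into $v_{\mathrm{in}},v_{\mathrm{out}}$ cannot force cycles of an undirected graph to alternate. A cycle can pass through a copy using two non-$M^*$ edges incident to it, and outside the bipartite case no orientation makes the directed cycles exactly the alternating ones. Hence your costs ($w$ on $M^*$-edges, $-w$ elsewhere) are not conservative: any cycle made of positive-weight non-$M^*$ edges is negative. The appeal to maximality of $M^*$ only covers alternating cycles. The paper instead forces alternation through the costs, using $w-K$ on $M^*$-edges and $K-w$ on the others. A cycle never has two consecutive $M^*$-edges, so every non-alternating cycle has a surplus non-$M^*$ edge and costs more than $K/2$. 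Alternating cycles cost $w(C\cap M^*)-w(C\setminus M^*)$, which lies in $[0,K/2)$. This gives both conservativeness and the correspondence.

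Your apex construction also misses components of $M\triangle M^*$ that are alternating paths ending in $M^*$-edges at $M^*$-covered vertices. For example, let $G$ be two disjoint positive-weight edges with $a$-values $1$ and $-1$. Then $M^*$ takes both edges, the best non-zero matching drops one, and there is no exposed vertex to attach the apex to. The paper handles this by adding a dummy vertex and zero-weight, zero-$a$ edges between all pairs. Every matching then extends to a perfect one, so $M\triangle M^*$ consists of alternating cycles only.
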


\begin{figure}[ht]
	\begin{center}
		\begin{subfigure}[t]{0.9\textwidth}
			\begin{center}
				\scalebox{.7}{
					\begin{tikzpicture}[scale=2]
						\node[circle, fill, label={[xshift=-1em, yshift=-2em] \bm{$\tilde{v}_2$}}] (1) at (0,0) {};
						\node[circle, fill, label={[xshift=1em, yshift=-2em] \bm{$v_2$}}] (2) at (1,0) {};
						\node[circle, fill, label={[xshift=-1em, yshift=-.5em] \bm{$\tilde{v}_1$}}] (3) at (0,1) {};
						\node[circle, fill, label={[xshift=1em, yshift=-.5em] \bm{$v_1$}}] (4) at (1,1) {};
						\node[circle, fill, label={[xshift=0em, yshift=-.2em] \bm{$v_e$}}] (5) at (2,.5) {};
						\node[circle, fill, label={[xshift=0em, yshift=-.2em] \bm{$w_e$}}] (6) at (3,.5) {};
						\node[circle, fill, label={[xshift=-1em, yshift=-2em] \bm{$w_2$}}] (7) at (4,0) {};
						\node[circle, fill, label={[xshift=1em, yshift=-2em] \bm{$\tilde{w}_2$}}] (8) at (5,0) {};
						\node[circle, fill, label={[xshift=-1em, yshift=-.5em] \bm{$w_1$}}] (9) at (4,1) {};
						\node[circle, fill, label={[xshift=1em, yshift=-.5em] \bm{$\tilde{w}_1$}}] (10) at (5,1) {};
						
						\draw[line width=1.5mm, darkgreen] (2) -- (1)-- (3) -- (4); 
						\draw[line width=1.5mm, darkgreen] (9) -- (10)-- (8) -- (7); 
						\draw[line width=1.5mm, darkred] (4) -- (5) -- (6) -- (9);
						\draw[line width=1.5mm, darkred, dashed] (2) -- (5) -- (6) -- (7);
						
						\node[darkgreen] () at (-.2, .6) {\bm{$K$}};
						\node[cyan!80!black] () at (-.27, .35) {\bm{$a(v)$}};
						\node[darkgreen] () at (.5, 1.2) {\bm{$\frac{K + y(v)}{2}$}};
						\node[darkgreen] () at (.5, -.2) {\bm{$\frac{K + y(v)}{2}$}};
						
						\node[darkgreen] () at (5.2, .6) {\bm{$K$}};
						\node[cyan!80!black] () at (5.27, .35) {\bm{$a(w)$}};
						\node[darkgreen] () at (4.5, 1.2) {\bm{$\frac{K + y(v)}{2}$}};
						\node[darkgreen] () at (4.5, -.2) {\bm{$\frac{K + y(v)}{2}$}};
						 
						\node[darkred] () at (2.5, .7) {\bm{$K$}}; 
						\node[darkred] () at (1.3, .5) {\bm{$\frac{K + w(e)}{2}$}}; 
						\node[darkred] () at (3.7, .5) {\bm{$\frac{K + w(e)}{2}$}}; 
					\end{tikzpicture}
				}
				\subcaption{Gadgets for $e=\{v,w\}$}
			\end{center}
		\end{subfigure}
		\vspace{.3cm}
		
		\begin{subfigure}[t]{0.3\textwidth}
			\begin{center}
				\scalebox{.7}{
					\begin{tikzpicture}[scale=2]
						\node[circle, fill] (1) at (0,0) {};
						\node[circle, fill] (2) at (1,0) {};
						\node[circle, fill] (3) at (1,1) {};
						\node[circle, fill] (4) at (0,1) {};
						
						\draw[line width=1.5mm, darkgreen] (1) -- (2) 
						node[midway, below] {\bm{$K-w$}} 
						node[midway, above, cyan!80!black] {\bm{$+a$}};
						
						\draw[line width=1.5mm, darkred] (2) -- (3) 
						node[midway, right] {\bm{$w-K$}} 
						node[midway, left, cyan!80!black] {\bm{$-a$}};
						
						\draw[line width=1.5mm, darkgreen] (3) -- (4) 
						node[midway, above] {\bm{$K-w$}}
						node[midway, below, cyan!80!black] {\bm{$+a$}};
						
						\draw[line width=1.5mm, darkred] (4) -- (1) 
						node[midway, left] {\bm{$w-K$}}
						node[midway, right, cyan!80!black] {\bm{$-a$}};
					\end{tikzpicture}
				}
				\subcaption{Augmenting cycle}
			\end{center}
		\end{subfigure}
		\hspace{1cm}
		\begin{subfigure}[t]{0.3\textwidth}
			\begin{center}
				\scalebox{.7}{
					\begin{tikzpicture}[scale=2]
						\node at (0,0.75) {};
						\node at (0,-0.75) {};
						\node[circle, fill] (1) at (0,0) {};
						\node[circle, fill] (2) at (1,0) {};
						\node[circle, fill] (3) at (2,0) {};
						
						\draw[line width=1.5mm, darkgreen] (1) -- (2)
						node[darkgreen, midway, above] {\bm{$K-c(e)$}};
						
						\draw[line width=1.5mm, darkred] (2) -- (3)
						node[darkred, midway, above] () at (1.5, .2) {\bm{$K$}};
						
						\node[cyan!80!black] (a) at (1,-.25) {\bm{$a(e)$}};
					\end{tikzpicture}
				}
				\subcaption{Gadget for $e$}
			\end{center}
		\end{subfigure}
	\end{center}
	\caption{Reductions in the proof of \cref{thm:equiv-matching-problems}}
	\label{fig:gadgets}
\end{figure}

We present the main ideas of the reductions in the following proof sketch. 
A full proof can be found in \cref{appendix:proof-equiv-matching-problems}.

\begin{proof}[sketch]
All reductions work with a sufficiently large constant $K \in \bR_{\geq 0}$ that dominates the objective.
\begin{itemize}
	\item[(a)] Reducing \ref{thm-item:equiv-b-matching} to \ref{thm-item:equiv-non-zero-matching}:\\
	We create node and edge gadgets as illustrated in \cref{fig:gadgets} (a).
	A solution $S$ to \Pnonzerominexcess{} with associated $b$-matching $M$
	corresponds to the matching in the constructed graph that contains all the edges $\{\tilde{v}_1, \tilde{v}_2\}$ for $v \in S$,
	the pairs of edges $\{v_1, \tilde{v}_1\}$, $\{v_2, \tilde{v}_2\}$ for $v \not\in S$, 
	one of the edges $\{v_i, v_e\}$, $i \in \{1,2\}$, for every $e \in M$ and $v \in e$,
	and all the edges $\{v_e, w_e\}$ for $e=\{v,w\} \not\in M$.

	Note that every maximum weight (non-zero) matching in the constructed
	graph contains edges with total weight $\approx K$ in every node and
	edge gadget, which ensures that it corresponds to a coalition in the
	$b$-matching game.
	
	\item[(b)] Reducing \ref{thm-item:equiv-non-zero-matching} to \ref{thm-item:equiv-non-zero-cycle}:\\
	We start by computing a maximum weight matching $\bar M$. If $a(\bar M) \neq 0$,
	then it is an optimum solution, and we are done. Otherwise, $\bar M \Delta M^*$
	can be decomposed into paths and cycles, where $M^*$ is an optimum solution.
	By adding trivial edges, we can restrict to cycles.

	Augmenting along these cycles can only decrease the weight, and at
	least one of them, say $C$, satisfies $a(C \cap \bar M) \neq a(C \cap M^*)$.
	It therefore suffices to find a shortest non-zero cycle in a modified
	graph, see \cref{fig:gadgets} (b).
	The same techniques were used in a slightly different context in \cite{el2025finding}.

	\item[(c)] Reducing \ref{thm-item:equiv-non-zero-cycle} to \ref{thm-item:equiv-b-matching}:\\
	We replace every edge $e$ by a path of length 2 with an additional vertex $v_e$, setting
	$a(v_e) \coloneqq a(e)$ and $a \coloneqq 0$ for all other vertices, as
	shown in \cref{fig:gadgets} (c).
	One of the path edges gets a cost of $K-c(e)$, the other one a cost of $K$.
	By setting $b \equiv 2$ and $y \equiv K$, we
	can ensure that every optimum solution to \Pnonzerominexcess{} for the
	$b$-matching game corresponds to a set of disjoint cycles.
	Since $c$ is conservative, only one of these cycles is non-trivial, and
	therefore an optimum solution to the \textsc{Shortest Non-Zero Cycle} problem.
	\hfill($\square$)
\end{itemize}
\end{proof}

\begin{theorem}
	\label{thm:b-matching-randomized}
	\Psubspaceavoidingminexcess{} for $b$-matching
	games with $b\leq2$ can be solved in randomized polynomial time if
	$w$ and $y$ are polynomially bounded.
\end{theorem}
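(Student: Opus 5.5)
We reduce the problem, via the polynomially bounded version of \cref{thm:equiv-matching-problems}, to \textsc{Maximum Weight Non-Zero Matching} with polynomially bounded weights $w$. We then solve that problem with \cref{lemma:congruency-min-excess-coalition} combined with the classical randomized algebraic technique for exact and congruency-constrained matching (Mulmuley--Vazirani--Vazirani / Tutte matrix with polynomial variables).

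\emph{Reduction.} By \cref{thm:equiv-matching-problems}, \Psubspaceavoidingminexcess{} for $b$-matching games with $b\le 2$ and polynomially bounded $w,y$ reduces in polynomial time to \textsc{Maximum Weight Non-Zero Matching} on a graph $G'$. The weights in $G'$ are polynomially bounded: the constant $K$ in the gadgets of \cref{fig:gadgets}(a) only needs to exceed a polynomial bound in $w$ and $y$. Alternatively, we can work directly with \Pnonzerominexcess{} and the gadget graph. Write the non-zero constraint $a(M)\neq 0$ as $M\notin L$ for the hyperplane $L=\{x : \langle a,x\rangle=0\}$ in $\bR^{E'}$, with matchings identified with incidence vectors. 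Then apply \cref{lemma:congruency-min-excess-coalition} with $f(M)=w(M)$ for matchings and $f=-\infty$ otherwise. This leaves polynomially many problems of the form $\max\{w(M) : M \text{ matching},\ a(M)\equiv k \bmod p\}$ with $p$ a polynomially bounded prime.

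\emph{Solving the congruency-constrained matching problem.} Since $w$ is polynomially bounded and integral (after scaling rationals), it suffices to decide, for each target weight $W$ in a polynomial range and each residue $k$, whether some matching $M$ has $w(M)=W$ and $a(M)\equiv k \pmod p$. First reduce to perfect matchings: add a copy of the graph joined vertex-to-vertex by edges with weight $0$ and $a=0$, so that matchings correspond to perfect matchings with the same $w$ and $a$. Then form the Tutte matrix with entries $\pm x_e\, z^{w(e)} t^{a(e)}$. Here $z$ is a formal variable of polynomially bounded degree, and $t$ is taken modulo $t^p-1$, i.e.\ we work in the group ring $\mathbb{F}[z][t]/(t^p-1)$.

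Replace each $x_e$ by a random element of a large field $\mathbb{F}$. The Pfaffian, or equivalently $\det$, can be evaluated by interpolation at points $z=\alpha_i$ and $t=\omega^j$ over a field containing $p$-th roots of unity; we may pick $\mathbb{F}=\mathbb{F}_q$ with $p\mid q-1$ and $q$ polynomially large. The coefficient of $z^W t^k$ in the Pfaffian is then a polynomial in the $x_e$ that is nonzero iff a perfect matching with those parameters exists, because distinct perfect matchings give distinct monomials in the $x_e$. Schwartz--Zippel gives a one-sided error bound. Since $\det = \mathrm{Pf}^2$, we use the Pfaffian directly, which is computable in polynomial time. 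Taking the largest feasible $W$ over all $k$ returned by the lemma yields the optimum value. A witness matching is recovered by self-reducibility, deleting edges one at a time and rechecking.

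\emph{Main obstacle.} The expected difficulty is the interplay between the modular group-ring arithmetic and randomization. We need polynomially many evaluation points, enough roots of unity, and a field large enough for Schwartz--Zippel, and we must check that reducing modulo $t^p-1$ still separates monomials correctly; it does, since the $x_e$ are distinct variables. We also need to confirm that the gadget reduction in \cref{thm:equiv-matching-problems} keeps every weight polynomially bounded. That reduction is precisely the claimed ``additionally'' part of that theorem, so we can assume it.
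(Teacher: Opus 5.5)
Your proposal is correct and follows the paper's route: you reduce via \cref{thm:equiv-matching-problems} to \textsc{Maximum Weight Non-Zero Matching} with polynomially bounded weights, pass to congruency-constrained subproblems with \cref{lemma:congruency-min-excess-coalition}, and solve those with the randomized algebraic exact-matching technique of Mulmuley--Vazirani--Vazirani. The only difference is bookkeeping: the paper folds the residue into one polynomially bounded weight $w'(e)=w(e)+L\cdot a(e)$ (with $a(e)\in[0,p-1]$ and $L=2\sum_{e}\abs{w(e)}+1$) and calls exact matching as a black box for every target value, which also yields witness matchings directly, whereas you carry the residue in a second variable $t$ modulo $t^p-1$ (for your perfect-matching reduction to hold, the copy's own edges must also get weight and $a$-value $0$).
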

\begin{proof}
By \cref{thm:equiv-matching-problems}, it suffices to solve \textsc{Maximum Weight Non-Zero Matching}.
By \cref{lem:equivalence_subspace_avoidance_non_zero},
we can transform any non-zero problem into a linear subspace avoiding problem,
which in turn can be reduced to a congruency-constrained problem as in \cref{lemma:congruency-min-excess-coalition}.
Let $(G,w,a,k)$ be an instance of the congruency-constrained version of
\textsc{Maximum Weight Non-Zero Matching}, with $G=(V,E)$, $w \in \mathbb{R}^E$, 
$a \in (\bZ_{p})^E$, and $k \in \bZ_p$ for some polynomially bounded prime $p$. 
We want to find a matching $M$ with $a(M) \equiv k \mod p$ that maximizes $w(M)$.
Let $M^*$ be such an optimum solution.

Let $L\coloneqq 2\sum_{e \in E}\abs{w(e)}+1$ and 
$w'(e) \coloneqq w(e) + L\cdot a(e)$ for all $e \in E$, 
where we view $a(e)$ as an integer in $[0, p-1]$. For every $r \in \bZ$ with
$\abs{r} \leq \sum_{e \in E}\abs{w'(e)}$, we can compute a matching $M_r$
with $w'(M_r)=r$, if it exists, in randomized polynomial
time, because $w'$ is polynomially bounded \cite{mulmuley1987matching}.
Let $r^*\coloneqq w'(M^*)$.
Since $L$ was chosen sufficiently large, we must have $a(M_{r^*})=a(M^*)$,
$w(M_{r^*})=w(M^*)$,
and $M_{r^*}$ is an optimum solution. 
It therefore suffices to check all the previously computed matchings
for feasibility and return the best one.
\qed
\end{proof}

\begin{theorem}
	\label{thm:lsa_b-matching_constant_2}
	\Psubspaceavoidingminexcess{} for $b$-matching
	games with $b\leq 2$ can be solved in polynomial time if $b(v)=2$ only
	for a constant number of vertices $v$.
\end{theorem}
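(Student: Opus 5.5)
Using \cref{lem:equivalence_subspace_avoidance_non_zero}, I would solve \Pnonzerominexcess{} directly. We are given $(G,w,b)$ with $b\le 2$, a cost allocation $y$, and $a\in\bZ^V$, and we seek $S\subseteq V$ with $a(S)\neq 0$ minimizing $y(S)-v(S)$. Let $T\coloneqq\{v : b(v)=2\}$, so $\abs{T}=O(1)$. The plan is to reduce this to a polynomial number of \textsc{Maximum Weight Non-Zero Matching} instances, in each of which the \emph{matching} is ordinary (every vertex has capacity $1$). For those instances I would invoke the polynomial-time algorithm for non-zero matroid problems of \textcite{hoersch2024problems}, or a direct argument in the style of part (b) of the proof of \cref{thm:equiv-matching-problems}.

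\textbf{Step 1 (guess the $T$-part).} I would enumerate $S\cap T$ and, for each $t\in S\cap T$, the at most two edges of an optimal $b$-matching $M$ incident to $t$. This gives $n^{O(\abs{T})}$ guesses, which is polynomial. After a guess, delete the guessed endpoints' remaining capacity: each guessed neighbour $u\notin T$ is forced into $S$ and saturated. Delete $T$ from the graph, account for the guessed weight and for the $y$- and $a$-contributions of the forced vertices, and note that the remaining graph has $b\equiv 1$.

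\textbf{Step 2 (matching-game minimum excess with a non-zero constraint).} On the residual instance with $b\equiv 1$, we must choose $S'\subseteq V'$ and a matching $M'$ in $G[S']$ so as to minimize $y(S')-w(M')$ subject to $a(S')\ne -a_0$, where $a_0$ is the fixed guessed contribution. I would encode this as a matching problem with the node gadget of part (a) of the proof of \cref{thm:equiv-matching-problems} specialised to $b=1$. Each vertex $v$ gets a pendant copy $\tilde v$ with edge $\{v,\tilde v\}$ of weight $-y(v)$ and label $a(v)$, meaning $v\in S'$ unmatched. A matched vertex inside $S'$ must also pay $y(v)$ and contribute $a(v)$, so I would put the weight $w(uv)-y(u)-y(v)$ and the label $a(u)+a(v)$ on each original edge $uv$. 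Vertices left uncovered are not in $S'$. Then a matching $N$ corresponds exactly to a pair $(S',M')$, with weight $-(y(S')-w(M'))$ and label $a(S')$. To turn the shifted constraint $\ne -a_0$ into a non-zero constraint, I would add a single new vertex $r$ with a forced edge, using a large weight $K$ as in the paper, that carries label $a_0$. This yields an instance of \textsc{Maximum Weight Non-Zero Matching}.

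\textbf{Step 3 (solving non-zero matching).} This is the main obstacle, since \cref{thm:equiv-matching-problems} only says the problem is equivalent to \textsc{Shortest Non-Zero Cycle}, whose complexity is open. However, in Step 2 the labels live on edges in a special way: they are induced by vertex labels, since $a(uv)=a(u)+a(v)$ and $a(\{v,\tilde v\})=a(v)$. Hence $a(N)=a(V(N)\cap V')$ depends only on the set of covered original vertices. Covered vertex sets of matchings form the independent sets of a matching matroid (a delta-matroid, but restricted to $V'$-coverable sets it is a matroid, since each $\tilde v$ lets $v$ be covered alone). So the task becomes: over independent sets $I$ of this matroid with $a(I)\ne 0$, maximize $\max\{\text{weight of a matching covering exactly } I\}$. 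This objective is not linear in $I$, so I would instead argue as in part (b). Take an optimal unconstrained $\bar N$; if $a(\bar N)=0$, an optimal non-zero $N^*$ differs by a symmetric difference whose components are alternating paths and cycles. Because the labels are vertex-induced, alternating cycles do not change the covered set, so some alternating path must flip $a$. It then suffices to find, for each pair of endpoints, a maximum-gain alternating path whose endpoint labels give $a\ne 0$. Only the endpoint labels matter, so this is a polynomial enumeration of endpoint pairs combined with standard maximum-weight augmenting-path computations. I expect the delicate point to be showing that a single alternating path suffices, i.e.\ that the remaining components may be discarded without loss, which should follow since each component on its own is non-improving with respect to $\bar N$.
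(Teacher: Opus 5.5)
Your proof is correct, but it takes a genuinely different route from the paper. The paper stays inside the equivalence chain of \cref{thm:equiv-matching-problems}: it pushes the instance through (i)$\to$(ii)$\to$(iii) and observes that the resulting \textsc{Shortest Non-Zero Cycle} instance has an optimum cycle with at most $k+2$ non-zero edges. The reason is that for $b(v)=1$ the non-zero gadget edge $\{\tilde v_1,\tilde v_2\}$ is a dead end, so it can only sit at the two ends of an augmenting path. The paper then guesses these edges $N^*$, closes them up by a minimum-cost $\mathrm{odd}(N^*)$-join in the zero edges, and uses conservativeness to extract an optimal cycle.

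You instead guess the $b=2$ part of the solution directly, which leaves a plain matching game. Your pendant encoding makes the labels vertex-induced, so $a(N)$ depends only on $V(N)$, and this is exactly what lets a single alternating path suffice. The step you flag as delicate is immediate: since $\bar N$ has maximum weight, every component $Q$ of $\bar N\Delta N^*$ satisfies $w(Q\cap N^*)\le w(Q\cap\bar N)$. So for a component $P$ that changes the label, $\bar N\Delta P$ is feasible and weighs at least $w(N^*)$. Your two path endpoints are precisely the paper's ``$+2$''.

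Your argument is more elementary and self-contained: it uses only weighted matching, with no gadget chain and no $T$-joins, and it shows transparently why the case $b\equiv1$ is easy. The paper's argument buys a statement about \textsc{Shortest Non-Zero Cycle} itself, namely that instances whose optimum has $O(1)$ non-zero edges are easy. That fits its program of relating the $b\le2$ case to this open problem. Both run in time $n^{O(k)}$.

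To make Step 3 airtight, do not rely on ``maximum-gain alternating paths'', which in non-bipartite graphs are not a plain shortest-path computation. Write $\alpha$ for your vertex labelling, so that $a(N)=\alpha(V(N))$. For every pair $\{s,t\}$ with $\alpha(V(\bar N)\Delta\{s,t\})\neq0$, compute a maximum weight perfect matching of the Step~2 graph induced on $V(\bar N)\Delta\{s,t\}$. Each such matching is feasible because the label depends only on the covered set, and the best one over all pairs is optimal by the argument above.

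The matroid detour and the appeal to \textcite{hoersch2024problems} can be dropped. With the pendants every subset of $V'$ is exactly coverable, so that matroid is free. The edge $rr'$ can also go: test $\alpha(\cdot)\neq -a_0$ directly, for $\bar N$ and for the sets $V(\bar N)\Delta\{s,t\}$.
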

\begin{proof}
We first show that it suffices to solve instances of \textsc{Shortest Non-Zero Cycle}
that have an optimum solution with only a constant number of non-zero
edges:
\begin{claim}
	\Psubspaceavoidingminexcess{} for $b$-matching
	games with $b\leq2$ and $b=2$ for at most $k$ vertices can be reduced to
	\textsc{Shortest Non-Zero Cycle} restricted to instances that have an
	optimum solution with at most $k+2$ non-zero edges.
\end{claim}
\textit{Proof of Claim.}
This follows from the proof of \cref{thm:equiv-matching-problems}. 
In the reduction from \ref{thm-item:equiv-b-matching}
to \ref{thm-item:equiv-non-zero-matching}, any non-zero edge $e$ of the
resulting graph corresponds to a vertex $v$ of the
original graph. If $b(v)=1$, then one of the endpoints of $e$ will lead to a dead end:
This endpoint was called $\tilde v_2$ in the reduction, which is a degree-2 vertex that 
connects to the degree-1 vertex $v_2$. In particular,
if the $b$-matching instance contains at most $k$ vertices $v$
with $b(v)=2$, then any path or cycle in the resulting graph can contain
at most $k+2$ non-zero edges.
In the reduction from \ref{thm-item:equiv-non-zero-matching} to
\ref{thm-item:equiv-non-zero-cycle}, the optimum non-zero
cycle corresponds to an augmenting cycle or path in the graph from
\ref{thm-item:equiv-non-zero-matching}. Therefore, it
contains at most $k+2$ non-zero edges, proving the claim.
\hfill$\diamondsuit$

Let $(G,c,a)$ be an instance of \textsc{Shortest Non-Zero Cycle} with $G=(V,E)$,
$C^*$ an optimum solution,
and $Z \coloneqq \{e \in E : a(e) = 0\}$.
By the claim, we can assume that $E \setminus Z$ has at most constant size, so we can guess $N^* = C^* \setminus Z$.
We compute a minimum-cost $\mathrm{odd}(N^*)$-join $J$ in $(V,Z)$.
The union $J \cup N^*$ consists of disjoint cycles with total cost at most $c(C^*)$
and total non-zero $a$-value. Since $c$ is
conservative, one of these cycles is an optimum solution.
\qed 
\end{proof}

\subsection{Arboricity and Network Strength Games}
\label{sec:arboricity-and-network-strength}

Next, we will study games with an underlying matroid structure, 
and show how to solve \Psubspaceavoidingminexcess{} in polynomial time based on results by \textcite{hoersch2024problems} on non-zero matroid bases. 
We will reduce to the following problem:

\begin{problem}[\textsc{Maximum Weight Non-Zero Independent Set}]
	Given a matroid $(E,\cF)$ (by a polynomial-time independence oracle), 
	weights $w \in \bR^E$, and $a \in \bZ^E$,
	find $F \in \cF$ with $a(F) \neq 0$ that maximizes $w(F)$.
\end{problem}

Specifically, we will study \emph{arboricity games} and \emph{network strength games}:

\begin{definition}
	\sloppy
	Let $G=(V,E)$ be an undirected graph and $P \coloneqq E$. The \mbox{\emph{arboricity game}} associated
	with $G$ is the cost game $(P,c)$, where for $S \subseteq E$,
	\[
	c(S) \ \coloneqq \ \min \left\{k \ : \ \text{$S$ can be covered by $k$ forests} \right\} \enspace .
	\]
	The \emph{network strength game} associated
	with $G$ is the value game $(P,v)$, where for $S \subseteq E$,
	\[
	v(S) \ \coloneqq \ \max \left\{ k \ : \ \text{$S$ contains $k$ disjoint spanning trees} \right\} \enspace .
	\]
\end{definition}

For cost games, the nucleolus is defined analogously to value games, 
but the excess of a coalition $S \subseteq P$ is now given by $c(S) - y(S)$.

\begin{theorem}\label{thm:matroid-games-subspace-avoidance}
	Consider a cooperative game for which \Psubspaceavoidingminexcess{} can be polynomially reduced to \textsc{Maximum Weight Non-Zero Independent Set}.
	Then, \Psubspaceavoidingminexcess{} can be solved in polynomial time.
	
	Especially, this holds for
	\begin{itemize}
		\item[(a)] arboricity games,
		\item[(b)] network strength games.
	\end{itemize}
\end{theorem}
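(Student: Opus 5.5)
The general statement follows once \textsc{Maximum Weight Non-Zero Independent Set} is solvable in polynomial time. I would derive this from the result of \textcite{hoersch2024problems}, which finds a maximum-weight basis $B$ with $a(B)\neq 0$ in oracle-polynomial time. To pass from bases to independent sets, I would truncate: for each $r=0,\ldots,\mathrm{rank}(E)$, consider the truncated matroid $\{F\in\cF:|F|\le r\}$, whose independence oracle is immediate. Its bases are exactly the independent sets of size $r$. Solving the non-zero basis problem for every $r$ and taking the best result gives the optimum. Composed with the assumed polynomial reduction and \cref{lem:equivalence_subspace_avoidance_non_zero}, this yields the first claim.

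For (a) and (b) I would first use \cref{lem:equivalence_subspace_avoidance_non_zero} to work with \Pnonzerominexcess{}. For arboricity games, with $P=E$, I minimize $c(S)-y(S)$ subject to $a(S)\neq 0$. Since $c(S)\in\{0,\dots,|V|-1\}$ (and $c(S)\le |E|$ in general), I would guess $k=c(S)$, handling $k=0$ trivially. The constraint $c(S)\le k$ says exactly that $S$ is independent in the union $\cM_k$ of $k$ copies of the graphic matroid. Union matroids have polynomial independence oracles via matroid partition. Minimizing $k-y(S)$ over $S$ independent in $\cM_k$ with $a(S)\neq 0$ is then \textsc{Maximum Weight Non-Zero Independent Set} with weights $w=y$. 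If the optimal $S$ actually has $c(S)<k$, its true excess is only smaller, and it is also found at the smaller guess. So the overall minimum over $k$ is correct.

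For network strength games, excess is $y(S)-v(S)$ and $v(S)\ge k$ iff $S$ contains $k$ disjoint spanning trees. I would again guess $k$ and minimize $y(S)$ over sets $S$ containing $k$ disjoint spanning trees with $a(S)\neq0$. Here I would pass to complements: $S$ contains $k$ disjoint spanning trees iff $E\setminus S$ is independent in the dual of $\cM_k$, restricted to the relevant rank, or more precisely iff $E\setminus S$ is co-independent with respect to a spanning set of $\cM_k$ of full rank $k(|V|-1)$. Then $y(S)=y(E)-y(E\setminus S)$, and $a(S)=a(E)-a(E\setminus S)\neq 0$ becomes $a(E\setminus S)\neq a(E)$. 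To turn this into a non-zero constraint, I would add one dummy element $d$ with $a(d)=-a(E)$ as a coloop in the matroid, forced by a huge weight. The cases $k>\lfloor |E|/(|V|-1)\rfloor$ are infeasible, and the case where $\cM_k$ is not of full rank is skipped.

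The main obstacle I expect is this complementation step. I need to verify that "$S$ is spanning in $\cM_k$ with rank $k(|V|-1)$" is the right characterization, which holds by Nash-Williams/Edmonds, and that the dual of a matroid union still has a polynomial oracle, which it does via rank. The dummy-element trick for shifting the target value $a(E)$ to zero also needs checking. Analogous slack in $v(S)>k$ is harmless by the same monotonicity argument as before.
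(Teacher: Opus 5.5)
Your proposal is correct and follows essentially the same route as the paper. You pass from the non-zero basis algorithm to independent sets via truncations, use the $k$-fold union of the graphic matroid with weights $y$ for arboricity games, and for network strength games take the dual of that union plus one heavy dummy element with $a$-value $-a(E)$ to reduce to a plain non-zero constraint. The paper does the same, except that it phrases the dummy as a new edge $e'$ with $a(E\cup\{e'\})=0$ and large $y(e')$, rather than as a coloop added directly to the dual matroid.
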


\begin{proof}
	\textcite{hoersch2024problems} show that a maximum weight non-zero
	basis can be computed in polynomial time for every matroid given by a 
	polynomial-time independence oracle. This immediately implies
	that we can also find a maximum weight non-zero independent set:
	Consider the $k$-truncation of $(E,\cF)$, i.e., the restriction to independent
	sets of size at most $k$, for every $1\leq k\leq \abs{E}$.
	This is again a matroid, and a polynomial-time independence oracle
	can easily be constructed. Thus, we can compute a maximum weight 
	non-zero basis in polynomial time.
	Now every independent set of $(E,\cF)$ is
	a basis of some $k$-truncation, so we can simply choose the best
	computed solution. In total, we can solve 
	\textsc{Maximum Weight Non-Zero Independent Set} in polynomial time.
	If \Psubspaceavoidingminexcess{} polynomially reduces to the latter,
	it can therefore also be solved in polynomial time.
	For the specific games, we will use the equivalence established 
	in \cref{lem:equivalence_subspace_avoidance_non_zero}
	and reduce \Pnonzerominexcess{}:
	\begin{itemize}
		\item[(a)] Arboricity games:\\
		Let $(V,E)$ be an undirected graph, $y \in \bR^E$, $a\in \bZ^E$,
		and $k \in \bN$. We will compute a coalition
		$S \subseteq E$ with $c(S)\leq k$ and $a(S)\neq0$ that maximizes $y(S)$.
		Doing this for all $1 \leq k \leq \abs{E}$ suffices to solve \Pnonzerominexcess{}.
		
		Let $(E,\cF)$ be the $k$-fold sum of the graphic matroid on $(V,E)$,
		i.e.,
		\[
		\cF \ \coloneqq \ \left\{F \subseteq E \ : \ F \text{ can be partitioned
			into $k$ forests} \right\} \enspace .
		\]
		By \textcite{nash1966application}, this is a matroid, and it has a
		polynomial-time independence oracle via matroid intersection. 
		It suffices to solve \textsc{Maximum Weight Non-Zero Independent Set} with weights $y$ over $\cF$.
		\item[(b)] Network strength games:\\
		Let $(V,E)$ be an undirected graph, $y \in \bR^E$, $a \in \bZ^E$, and $k \in \bN$.
		Again, by enumerating all relevant values for $k$, 
		\Pnonzerominexcess{} reduces to finding
		a set $S \subseteq E$ with $a(S) \neq 0$ that contains at least $k$ disjoint
		spanning trees (i.e., $v(S)\geq k$) and minimizes $y(S)$.
		Without loss of generality, $a(E)=0$, otherwise add a new edge $e'$
		such that $a(E\cup\{e'\})=0$ and $y(e')$ sufficiently large.
		
		We define the matroid $(E,\cF)$ as before by
		\[
		\cF \ \coloneqq \ \left\{F \subseteq E \ : \ F \text{ can be partitioned
			into $k$ forests}\right\} \enspace .
		\]
		We can assume that the bases of $(E,\cF)$ are exactly the disjoint unions
		of $k$ spanning trees, otherwise there is no coalition $S$ with
		$v(S) \geq k$. Let $(E,\cF^*)$ be the dual of $(E,\cF)$, i.e.,
		\[
		\cF^* \ \coloneqq \ \left\{F \subseteq E \ : \ E \setminus F \text{ contains a basis of } (E,\cF) \right\} \enspace .
		\]
		This is again a matroid with a polynomial-time independence oracle.
		Solving \textsc{Maximum Weight Non-Zero Independent Set} 
		with weights $y$ over $\cF^*$ yields
		an independent set $F \in \cF^*$ with $a(F)\neq0$ and
		$y(F)$ maximum. Then, $E \setminus F$ contains $k$
		disjoint spanning trees, $a(E \setminus F)=-a(F)\neq0$, and
		$y(E \setminus F)$ is minimum.
		So $S=E \setminus F$ is our wanted coalition.
		\qed
	\end{itemize}
\end{proof}

\section{\Psubspaceavoidingminexcess{} for General Games}

In this section, we study the hardness of linear subspace avoidance, that is, the complexity gap between \Psubspaceavoidingminexcess{} and \Pminexcess{}:

\begin{problem}[\Pminexcess]
	\label{def:subspace-avoid-min-excess}
	Given a cooperative game $(P,v)$ and a cost allocation $y \in \bR^P$,
	find a coalition $S\subseteq P$ that minimizes $y(S)-v(S)$.
\end{problem}

It was known before that for non-monotone games, linear subspace avoidance might increase complexity from polynomial-time solvability to NP-hardness \cite{konemann2020general}.
We extend this result to monotone games, using completely new ideas.
Then, we provide a reduction from \Psubspaceavoidingminexcess{} to \Pminexcess{} that incurs an arbitrarily small additive error in the approximation guarantee, which is the best we can hope for.
Finally, we show that any such tiny error might propagate exponentially in the MPS scheme, 
so the reduction does not imply satisfying guarantees for approximate nucleolus computation. 
In fact, this leads to a general instability result for the nucleolus as a mapping of the value function of a cooperative game, even if we restrict to monotone and superadditive games.

\subsection{Hardness of Linear Subspace Avoidance}

For non-monotone games, linear subspace avoidance constraints can increase
the complexity of \Pminexcess{}: Computing a shortest $s_1$-$t_2$-path in
a directed graph is easy, but by forcing the inclusion of an artificial edge $(t_1,s_2)$, subspace-avoidance
constraints can encode the NP-hard \textsc{Directed Two Disjoint Paths} problem
\cite{konemann2020general}.
Note that the value function of this game is not monotone, as it must strictly distinguish feasible paths from $s_1$ to $t_2$.
Further, for monotone games, enforcing a single player is easy:
Simply set its $y$-value to $0$ and add it to the result of \Pminexcess{}.
Thus, we need new ideas for monotone games.

\begin{theorem}\label{thm:hardness_of_subspace_avoidance}
	There exists no algorithm that
	solves \Psubspaceavoidingminexcess{} for a general monotone game $(P,v)$
	with at most $\abs{P}^{O(1)}$ oracle calls
	to an oracle that solves \Pminexcess{}.
\end{theorem}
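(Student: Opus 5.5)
The plan is an information-theoretic (oracle) lower bound. I will build a family of monotone games on $n$ players. A \Pminexcess{} oracle will reveal almost nothing about which member of the family we face, while \Psubspaceavoidingminexcess{} (equivalently \Pnonzerominexcess{} by \cref{lem:equivalence_subspace_avoidance_non_zero}) pins the member down. Concretely, fix $P$ with $\abs{P}=n$ even and a base game $v_0(S)\coloneqq \abs{S}$ (or a suitable modular/superadditive function). For a hidden coalition $T\subseteq P$ drawn from an exponentially large family $\mathcal T$ (e.g.\ all sets of size $n/2$), define
\[
v_T(S) \ \coloneqq \ v_0(S) + \delta\cdot[S = T] ,
\]
with $\delta>0$ small. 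Monotonicity is checked directly: $v_0$ increases by $1$ per added player, and the bump $\delta<1$ is local, so $v_T(S)\le v_T(S\cup\{p\})$ still holds. The LSA query will use $y\equiv 1$ and a subspace $L$ that contains every coalition of size $\neq n/2$ but not the uniform-size layer. One option is $a\in\bZ^P$ chosen so that $a(S)=0$ except for a structured subfamily; if that is cumbersome, I will take $L$ of codimension one with normal vector having coordinates $1,-1$ paired appropriately. Under this query, excesses are $0$ everywhere except $-\delta$ at $T$, so the LSA answer is exactly $T$ whenever $T\notin L$.

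Next I will argue that \Pminexcess{} answers are uninformative. The key step is to design the family and the oracle's answering strategy adversarially. For any query $y$, the adversary answers with the minimizer for $v_0$, breaking ties in a fixed way. This is a valid answer for $v_T$ unless $y(T)-v_0(T)-\delta<\min_S (y(S)-v_0(S))$. For a given $y$, the set of $T$ for which the answer changes must be coalitions whose $v_0$-excess lies within $\delta$ of the optimum. I then need to bound, uniformly over $y$, how many members of $\mathcal T$ can be ``near-optimal'' simultaneously in a way that forces the answer to differ. Since the oracle only returns \emph{one} coalition, the adversary returns the $v_0$-minimizer $S_0$ if $S_0$ remains optimal for all but one $T$. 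If some $T$ beats it, the answer reveals only that one $T$. Thus each query eliminates at most one candidate (the returned $T$ itself, or one with a strictly better excess). Making this precise: for fixed $y$, only the single best $T\in\mathcal T$ (by $v_0$-excess) can be returned, so one query rules out at most one $T$, and $\abs{P}^{O(1)}$ queries cannot distinguish among $\binom{n}{n/2}$ candidates. The algorithm's output must then be wrong for some consistent $T$.

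The main obstacle is making both sides simultaneously sound. First, the family must sit outside a \emph{single} subspace $L$ that is fixed for the query. Second, the perturbation must not be detectable via \Pminexcess{} queries whose $y$ is chosen adaptively and can make any particular $T$ uniquely optimal, e.g.\ $y=\chi_{P\setminus T'}$-type weights. The counting argument above handles adaptivity since each query can certify at most one candidate. The delicate part is ensuring monotonicity is not broken and that $T\notin L$ for all $T\in\mathcal T$. I would first try $L=\{x: x(P)\cdot\tfrac{n}{2}\text{-layer}\}$ replaced by $L=\mathrm{span}$ of all coalitions of size $\neq n/2$, noting this is all of $\bR^P$. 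That fails, so the fallback is to choose $\mathcal T$ as all sets with $a(T)\neq0$ for $a=(1,\dots,1,-1,\dots,-1)$ restricted to a layer. I will also adjust $v_0$ so that sets with $a(S)=0$ and the rest have equal excess, which keeps the candidate family exponential while $L=\{x:a^\top x=0\}$.
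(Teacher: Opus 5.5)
Your family gives no lower bound, because the hidden coalition is visible to \Pminexcess{} itself. With $v_T(S)=|S|+\delta\cdot[S=T]$ and $y\equiv 1$, every coalition has excess $0$ except $T$, whose excess is $-\delta$. So $T$ is the unique \emph{unconstrained} minimizer. A single oracle call on the same input $(P,v_T,y)$ must therefore return $T$, which is already the \Psubspaceavoidingminexcess{} answer, and the subspace constraint does no work.

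Your adversary breaks exactly here: answering with the $v_0$-minimizer on this query is inconsistent with every $T\in\mathcal T$. The accounting ``each query rules out at most one candidate'' is the wrong measure. Returning $S_0$ eliminates every $T$ whose bumped excess would beat $S_0$, i.e.\ every $T$ within $\delta$ of the $v_0$-optimum. What must be bounded is the number of candidates for which the returned coalition is not a valid answer, and for your family this is all of $\mathcal T$ at $y\equiv 1$. Your fallback, giving the sets with $a(S)=0$ the \emph{same} excess as the others, changes nothing, since $T$ still wins by $\delta$.

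Two ingredients are missing. First, you need shadowing: coalitions inside $L$ must have strictly \emph{smaller} excess than any hidden candidate, so the unconstrained minimum never exposes the candidate. Second, you need a structural lemma bounding, uniformly over every $y'$ the algorithm may query, how many candidates can beat the shadow.

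The paper's construction does this as follows.
\begin{itemize}
\item It takes $P=A\dot\cup B$ with $\abs{A}=\abs{B}=2k$. Balanced sets ($\abs{S\cap A}=\abs{S\cap B}=k$) and all sets of size $>2k$ get value $2k+\frac12$, and all other sets get value $\abs{S}$.
\item Only the hidden $S^*$, with $\abs{S^*\cap A}=k+1$ and $\abs{S^*\cap B}=k-1$, is raised to $2k+\frac16$.
\item For $a|_A\equiv1$, $a|_B\equiv-1$ and $y\equiv1$, the balanced sets have excess $-\frac12$ but $a(S)=0$. Meanwhile $S^*$ is the unique non-zero minimizer, with excess $-\frac16$.
\item The adversary always returns a minimizer $S'$ of the base game. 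If $S'$ is invalid for $v_{S^*}$, then $S^*$ is the unique optimum.
\item Single exchanges then give $y'(p)-y'(q)>-\frac16$ for $q\in S^*$, $p\notin S^*$, and $y'(b)-y'(a)>\frac13$ for $a\in A\cap S^*$, $b\in B\setminus S^*$.
\item Two such candidates differing on both $A$ and $B$ contradict each other. Hence each query invalidates at most $\binom{2k}{k+1}$ of the $\binom{2k}{k+1}^2$ candidates.
\item Fewer than $\binom{2k}{k+1}$ queries therefore leave several consistent games with different answers.
\end{itemize}
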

\begin{proof}
Fix $k \in \bN$ with $k\geq 2$, sets $A, B$ with $\abs{A}=\abs{B}=2k$, and let $P \coloneqq A \dot\cup B$.
For $S^* \subseteq P$ with $\abs{S^* \cap A} = k+1$ and $\abs{S^* \cap B} = k-1$,
and $S\subseteq P$, define
\begin{align*}
\bar v(S) \ &\coloneqq \
\begin{cases}
	2k + \frac{1}{2} \quad&\text{if } \ |S|>2k \text{ or } |S \cap A| = |S \cap B| = k \enspace ,\\
	|S| \quad& \text{otherwise} \enspace,
\end{cases} \\
v_{S^*}(S) \ &\coloneqq \
\begin{cases}
	2k + \frac{1}{6} \quad&\text{if } \ S=S^* \enspace ,\\
	\bar v(S) \quad&\text{otherwise} \enspace.
\end{cases}
\end{align*}
Clearly, $(P, \bar v)$ and $(P, v_{S^*})$ define monotone games.

\begin{claim}
Let $y' \in \bR^P$ and $S'$ be a coalition with minimum excess $y'(S) - \bar v(S)$. 
Let $\cS(y')$ contain all sets $S^*$ with $\abs{S^* \cap A}=k+1$ and $\abs{S^* \cap B}=k-1$
such that $S'$ is not a minimum-excess coalition of $(P, v_{S^*})$.
Then, $\abs{\cS(y')} \leq {2k \choose k+1}$.
\end{claim}

\textit{Proof of Claim.}
Clearly, either $S'$ or $S^*$ is a minimum-excess coalition of $(P,v_{S^*})$.
If $S^*$ is the unique optimum solution,
then swapping an element of $S^*$ against an element outside $S^*$ must always increase the excess.
This yields:
\begin{enumerate}[label=(\roman*)]
	\item $y(p) - y(q) > - \frac{1}{6}$ for all $q \in S^*$ and $p \not\in S^*$ ,
	\item $y(b) - y(a) > \frac{1}{3}$ for all $a \in A \cap S^*$ and $b \in B \setminus S^*$ .
\end{enumerate}
Now assume that $\abs{\cS(y')} > {2k \choose k+1}={2k \choose k-1}$.
Then, there must be $S_1, S_2 \in \cS(y')$ with $S_1 \cap A \neq S_2 \cap A$ and $S_1 \cap B \neq S_2 \cap B$.
Otherwise, $\cS(y')$ would be fixed on either $A$ or $B$, and thereby too small.
Consider $a \in (S_2 \setminus S_1) \cap A$ and $b \in (S_1 \setminus S_2) \cap B$.
They satisfy $y(b)<y(a)+\frac{1}{6}$ by (i) with $S^*=S_1$,
and $y(b) > y(a) + \frac{1}{3}$ by (ii) with $S^*=S_2$, which is
clearly a contradiction, and proves the claim.
\hfill $\diamondsuit$

Now consider \Psubspaceavoidingminexcess.
By \cref{lem:equivalence_subspace_avoidance_non_zero}, it is polynomially equivalent to \Pnonzerominexcess{}.
Let $a_{\vert A} \equiv 1$ and $a_{\vert B} \equiv -1$, and $y \equiv 1$.
Then, the unique coalition $S$ with $a(S) \neq 0$ that minimizes the excess $y(S)-v_{S^*}(S)$ is $S=S^*$.

Consider an algorithm with access to $P$, $a$, and $y$, which can make
at most $\big({2k \choose k+1} - 1\big)$ queries to an oracle for \Pminexcess{}. 
Let the oracle always return a minimum-excess coalition for $(P,\bar v)$. 
By the claim, there are at least
\[
\underbrace{{2k \choose k+1}^2}_{\text{Choices for }S^*} \
- \ \underbrace{ \left( {2k \choose k+1} - 1 \right) }_{\text{Number of oracle calls}}
\;\cdot\; \underbrace{{2k \choose k+1}}_{\text{Bound on }\abs{\cS(y')}} \
= \quad {2k \choose k+1} \ > \ 1
\]
sets $S^*$ such that the oracle results are also correct for $(P,v_{S^*})$.
Since each of these games has a different, unique optimum solution to \Pnonzerominexcess{}, the algorithm cannot return an optimum solution to \Pnonzerominexcess{} in general.
Since $\big({2k \choose k+1} - 1\big)$ is exponential in $k=\Omega(|P|)$, this finishes the proof.
\qed 
\end{proof}

\subsection{Reducing \Psubspaceavoidingminexcess{} to \Pminexcess}

In the following, we show how to approximate \Psubspaceavoidingminexcess{} for monotone games
where \Pminexcess{} can be approximated.
The reduction incurs an arbitrarily small additional error in the approximation guarantee.
The intuitive notion of approximation, i.e., requiring a solution whose
excess is at most $\alpha$ times the optimum excess, does not make much sense,
since the optimum excess might well be zero or negative.
Instead, we define:

\begin{definition}[\bm{$\alpha$}-approximation] \label{def:alpha_approx}
	Let $\cF$ be a family of monotone cooperative games with non-negative value
	functions, and $\alpha\geq1$. An \emph{$\alpha$-approximation
	algorithm} for \Pminexcess{} or \Psubspaceavoidingminexcess{} for $\cF$
	is an algorithm that, given $(P,v) \in \cF,\ y\in\bR^P_{\geq 0}$ and a linear
	subspace $L\subsetneq \bR^P$, computes in polynomial time a feasible solution
	$S \subseteq P$ and a lower value bound  $\lambda \leq v(S)$ such that
	\[
	y\left(S\right)-\lambda \ \leq \ \alpha\cdot y\left(S^*\right) \ - \ v\left(S^*\right) \enspace ,
	\]
	where $S^*$ is an optimum solution.
\end{definition}

For $\alpha=1$, it is clear that $\lambda=v(S)$ and the algorithm computes
a solution with minimum excess. For $\alpha>1$, the computed solution may have larger
excess than the optimum one. In this case, we allow $\lambda<v(S)$, which is
relevant for games where the value function cannot be computed exactly in
polynomial time.
This is e.g.\ the case for maximum independent set games and 
maximum clique games as defined in \cite{deng1999algorithmic}.
For technical reasons, we assume that
$\max_{S\subseteq P}v(S)$ is polynomially bounded in the input size.

\begin{lemma}\label{lem:price-collecting-coalition-restr}
	Let $\cF$ be a family of monotone cooperative games and $\alpha\geq1$
	such that there is an $\alpha$-approximation algorithm for \Pminexcess{} for $\cF$.
	Let \mbox{$(P,v)\in\cF$}, $y \in\bR^P_{\geq0}$, and
	$A,B\subseteq P$ disjoint subsets.
	We can find an $\alpha$-approximate solution to \Pminexcess{} 
	restricted to coalitions $S$ with $A\subseteq S \subseteq P\setminus B$ in polynomial time.
\end{lemma}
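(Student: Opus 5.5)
The plan is to reduce the restricted problem to an unrestricted instance of \Pminexcess{} on a modified game from the same family, using only the cost allocation. Monotonicity is what makes this possible, in the same way as the remark before \cref{thm:hardness_of_subspace_avoidance} (set a player's $y$-value to $0$ to enforce it). To force the players of $A$ into the coalition, define $y'$ by $y'_p \coloneqq 0$ for $p\in A$. To keep the players of $B$ out, set $y'_p \coloneqq M$ for $p \in B$, where $M$ is a large number, say $M > \alpha\, y(P) + \max_{S} v(S)$. This is polynomially representable because $\max_S v(S)$ is assumed polynomially bounded. Set $y'_p \coloneqq y_p$ for all other $p$. Then $y'\ge 0$ and the game $(P,v)\in\cF$ is unchanged, so the given $\alpha$-approximation algorithm applies. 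Run it on $(P,v),y'$ to obtain $S'$ and $\lambda'\le v(S')$.

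Next, post-process the output. Let $S\coloneqq (S'\cup A)\setminus B$ and $\lambda \coloneqq \lambda'$ minus a correction, to be determined. Adding $A$ costs nothing under $y'$ and cannot lower $v$, by monotonicity, so $\lambda'\le v(S'\cup A)$ remains a valid bound. Removing $B$ is the delicate part, because it may decrease $v$.

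I will argue that $S'\cap B=\emptyset$ anyway. Let $S^*$ be an optimal restricted solution. Then $y'(S^*)=y(S^*\setminus A)\le y(S^*)$, and the guarantee gives
\[y'(S')-\lambda' \le \alpha\, y'(S^*)-v(S^*)\le \alpha\, y(P).\]
If $S'$ met $B$, the left-hand side would be at least $M-\max_S v(S)>\alpha\,y(P)$, which is a contradiction. Hence $S'\cap B=\emptyset$, so $S=S'\cup A$ is feasible with $\lambda\coloneqq\lambda'\le v(S)$.

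Finally, compare with $S^*$ under the original $y$. We have
\[y(S)-\lambda = y'(S)+y(A)-\lambda' \le \alpha\, y'(S^*) - v(S^*) + y(A).\]
The step I expect to be the main obstacle is that the term $y(A)$ is not multiplied by $\alpha$. Since $\alpha y'(S^*)+y(A)= \alpha y(S^*) - (\alpha-1)y(A)\le \alpha y(S^*)$, because $A\subseteq S^*$ and $y\ge0$, this yields exactly the bound $y(S)-\lambda\le\alpha y(S^*)-v(S^*)$ required by \cref{def:alpha_approx}. If some subtlety arises here, for instance with the ordering of the inequality, the fallback is to use a large constant in place of zero on $A$. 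The remaining checks are that $M$ has polynomial encoding length and that the case where no feasible restricted coalition exists is trivial. The case $A\cap B\neq\emptyset$ is excluded by assumption.
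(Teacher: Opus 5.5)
Your proposal is correct and follows the paper's proof essentially step for step. In both, you set $y$ to $0$ on $A$ and to a prohibitive value on $B$, add $A$ to the returned coalition using monotonicity so that $\lambda'$ stays a valid bound, and absorb the unscaled $y(A)$ term via $\alpha\ge 1$ and $y\ge 0$. The only difference is your penalty $M>\alpha\,y(P)+\max_S v(S)$ in place of the paper's $\alpha U+1$, which makes your argument for $S'\cap B=\emptyset$ fully explicit; the fallback you mention at the end is not needed.
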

\begin{proof}
	Let $(P,v, y, A, B)$ be an instance of the restricted
	minimum-excess problem with $(P,v)\in\cF$.
	For $p \in P$, we define
	\[
	\hat y_p \ \coloneqq \ \begin{cases}
		0 \qquad&\text{if }p\in A \enspace , \\
		\alpha\cdot U+1 \qquad&\text{if }p\in B \enspace , \\
		y_p \qquad&\text{otherwise} \enspace ,
	\end{cases}
	\]
	where $U\geq0$ is an upper bound on $v(P)$.
	Let $(S',\lambda')$ be
	an $\alpha$-approximate solution of \Pminexcess{} for $(P,v,\hat y)$.
	Then, $S' \cap B=\emptyset$ (otherwise $S'$ cannot be an $\alpha$-approximate
	solution by definition of $\hat y$), 
	so $S''\coloneqq S'\cup A$ is a feasible
	solution to the restricted problem.
	By monotonicity of $v$, $\lambda' \leq v(S'')$.
	
	Let $S^*$ be an optimum solution to the restricted problem for $y$. Then,
	\begin{align*}
		y\left(S''\right)-\lambda' \
		&= \ \hat y\left(S'\right) \ + \ y\left(A\right) \ - \ \lambda' \\
		&\leq \ \alpha\cdot \hat y\left(S^*\right) \ - \ v \left(S^* \right) \ + \ y\left(A\right) \\
		&\leq \ \alpha\cdot \big( \hat y\left(S^* \right)+y\left(A\right) \big) \ - \ v\left(S^*\right) \\
		&= \ \alpha\cdot y \left(S^*\right) - v\left(S^*\right) \enspace ,
	\end{align*}
	so $\left( S'',\lambda' \right)$ is an $\alpha$-approximate solution to the
	restricted problem for $y$.
	Note that we used $y \geq 0$ for the second inequality. 
	\qed 
\end{proof}
For general monotone games, \cref{thm:hardness_of_subspace_avoidance}
shows that \Psubspaceavoidingminexcess{} cannot be reduced to
\Pminexcess{} without loss. We present a reduction that incurs an arbitrarily small
additional error:

\begin{theorem}\label{thm:red-subspace-avoiding-coal}
	Let $\cF$ be a family of monotone cooperative games and $\alpha\geq1$
	such that there exists an $\alpha$-approximation algorithm for \Pminexcess{} for $\cF$.
	Fix $\varepsilon > 0$.
	There exists an
	$(\alpha+\varepsilon)$-approximation algorithm
	for \Psubspaceavoidingminexcess{} for $\cF$.
\end{theorem}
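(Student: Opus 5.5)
The plan is to pass to non-zero constraints first. By \cref{lem:equivalence_subspace_avoidance_non_zero} and its proof, a subspace $L$ is handled by integral rows $a_1,\dots,a_k$, and we return the best coalition over all $i$ among those found with $a_i(S)\neq 0$. If each subproblem is solved $(\alpha+\varepsilon)$-approximately, so is the original problem, because the optimum $S^*$ of the original problem satisfies $a_i(S^*)\neq 0$ for some $i$. So it suffices to give an $(\alpha+\varepsilon)$-approximation for \Pnonzerominexcess{} with a single $a\in\bZ^P$. The only tool is \cref{lem:price-collecting-coalition-restr}: $\alpha$-approximate minimum excess with prescribed sets $A\subseteq S\subseteq P\setminus B$.

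Fix $t\coloneqq\lceil 1/\varepsilon\rceil$ and let $S^*$ be an optimum with $a(S^*)\neq0$. If $\abs{S^*}\le t$, we find it by enumeration among at most $|P|^t$ sets. Otherwise we guess the set $T\subseteq S^*$ of the $t$ players of $S^*$ with largest $y$-value, breaking ties by a fixed order. We put every other player whose $y$ is larger than $m\coloneqq\min_{q\in T}y_q$ (or equal and earlier in the order) into $B$. Then $S^*$ is feasible for the restriction $A=T$, $P\setminus B$. Every player $p\notin B\cup T$ satisfies $y_p\le m\le y(T)/t\le\varepsilon\, y(S^*)$, where we use $y\ge0$.

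For a fixed guess, we iterate. We call \cref{lem:price-collecting-coalition-restr} with $(A,B)$ and obtain $(S',\lambda')$ with $y(S')-\lambda'\le\alpha y(S^*)-v(S^*)$, and we keep the invariant that $S^*$ remains feasible. If $a(S')\neq0$, we output it. Otherwise we look for a player $p\notin S'\cup B$ with $a_p\neq0$. Then $S'\cup\{p\}$ is non-zero, and by monotonicity $\lambda'\le v(S'\cup\{p\})$. Its approximate excess is therefore at most $\alpha y(S^*)-v(S^*)+y_p\le(\alpha+\varepsilon)y(S^*)-v(S^*)$, so it is an admissible output.

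If no such $p$ exists, every player outside $S'\cup B$ has $a$-value $0$. Since $S^*\cap B=\emptyset$, this gives $a(S^*\cap S')=a(S^*)\neq0=a(S')$. Hence some $r\in S'\setminus S^*$ has $a_r\neq 0$, and $r\notin T$ because $T\subseteq S^*$. We guess this $r$ by trying all candidates, add it to $B$, and repeat. The invariant that $S^*$ is feasible survives, and $B$ grows, so a branch that guesses correctly terminates within $|P|$ rounds.

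The main obstacle is keeping this branching polynomial. Guessing $r$ in every round naively would give $|P|^{|P|}$ branches. I would try to avoid this by observing that the final certificate only needs some $r\in S'\setminus S^*$ with $a_r\neq0$. Alternatively, we can try all $r\in S'\setminus T$ with $a_r\neq0$ and only extend branches in which the new $S'$ again fails, and then argue that the accumulated set $B\setminus B_0$ can be chosen as a single guessed set whose size is bounded by a constant depending on $\varepsilon$. Failing that, I would guess up front one element $q\in S^*\setminus T$ with $a_q\neq0$ and balance signs, in order to bound the number of rounds. Finally, we output the best non-zero coalition found over all guesses; the correct guess certifies the bound $(\alpha+\varepsilon)y(S^*)-v(S^*)$, which completes the $(\alpha+\varepsilon)$-approximation.
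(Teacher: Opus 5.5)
Everything up to the iteration is sound:
\begin{itemize}
\item the row-wise reduction to single vectors $a_i$;
\item the guess of the $t$ largest players $T$ of $S^*$, and the blocking set $B$ with $S^*\cap B=\emptyset$;
\item the bound $y_p\le m\le y(T)/t\le\varepsilon\,y(S^*)$ for every $p\notin T\cup B$;
\item the two output rules ($S'$ itself, or $S'\cup\{p\}$ with $\lambda'$ kept by monotonicity).
\end{itemize}
(For the small case $\abs{S^*}\le t$ you may not be able to evaluate $v$ exactly. Obtain $\lambda$ for a candidate $S$ from the restricted lemma with $A=S$ and $B=P\setminus S$.) The gap is the one you flag yourself: you never show that the procedure runs in polynomial time. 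You bound the number of rounds by $\abs{P}$ with a fresh guess of $r$ in each round, which gives exponentially many branches. The three remedies in your last paragraph are intentions, not arguments. In particular, the claim that $B\setminus B_0$ has constant size is asserted without proof, and guessing some $q\in S^*\setminus T$ to ``balance signs'' is not worked out. Since the theorem asks for a polynomial-time algorithm, this is the missing piece.

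The fix is a one-line invariant, and it shows that no branch ever gets stuck twice.
\begin{itemize}
\item Suppose a round is stuck: $a(S')=0$ and every player outside $S'\cup B$ has $a$-value $0$. Then $S'\cap B=\emptyset$ gives $a(P\setminus B)=a(S')=0$.
\item After you add $r$ with $a_r\neq 0$, you get $a\bigl(P\setminus(B\cup\{r\})\bigr)=-a_r\neq 0$.
\item By the same computation, the next round cannot be stuck. It outputs either a non-zero $S'$, or $S'\cup\{p\}$ with $p\notin S'\cup B\cup\{r\}$. In the latter case $p\notin T\cup B$, so $y_p\le\varepsilon\,y(S^*)$.
\item On the branch with $r\notin S^*$, $S^*$ stays feasible, so the required bound holds.
\end{itemize}
So a single guess of $r$ suffices, giving $O(\abs{P}^{t+1})$ oracle calls per row.

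The paper avoids iteration altogether by guessing the cheap excluded player before calling the oracle. Let $P'=\{p:\{p\}\notin L\}$. If some $p\in P'\setminus S^*$ has $y_p\le\varepsilon\,y(S^*)$, it forbids $p$ and returns whichever of $S$ and $S\cup\{p\}$ avoids $L$. Otherwise every player of $P'\setminus S^*$ is expensive. Then guessing the $K$ most expensive players of $S^*\cap P'$ determines $S^*\cap P'$ exactly, and fixing $S\cap P'$ to this set forces $S\notin L$. With the invariant added, your reactive variant is a valid alternative: it replaces the paper's case distinction with the top-$t$ guess in every case plus one extra level of guessing.
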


\begin{figure}[ht]
	\newcommand{\ExampleShared}{
		\begin{scope}[every node/.style={circle,fill,draw,scale=0.5}]
			\node (p1) at (0,0) {};
			\node (p2) at (1.1,1.3) {};
			\node (p3) at (2,0.1) {};
			\node (p4) at (2,1.4) {};
			\node (p5) at (2.6,2) {};
			\node (p6) at (0.4, 1.5) {};
			\node (p7) at (1.6, 0.5) {};
			\node (p8) at (2.9, 0) {};
		\end{scope}
		\begin{scope}[every node/.style={circle,fill=gray!40,draw=gray!80,scale=0.5}]
			\node (p'1) at (0.1,0.7) {};
			\node (p'2) at (0.8,0.4) {};
			\node (p'3) at (0.9,1.9) {};
			\node (p'4) at (1.6,1.1) {};
			\node (p'5) at (1.9,2) {};
			\node (p'6) at (2.4, 0.4) {};
			\node (p'7) at (1.6, -0.3) {};
			\node (p'8) at (3.2, 1.3) {};
			\node (p'9) at (2.8, 2.6) {};
			\node (p'10) at (1.2, -0.1) {};
		\end{scope}
		\draw[very thick,darkred,rotate=60] (2.4,-1.4) ellipse (1.6 and 1);
		\node[darkred] at (3.4,0.5) {\bm{$S^*$}};
		\node at (1,-0.4) {};
	}
	\begin{subfigure}[t]{0.45\textwidth}
		\begin{center}
			\scalebox{.8}{
				\begin{tikzpicture}
					\draw[very thick, blue, dashed] (0,0) circle (15pt); 
					\draw[very thick,blue, fill=white, rotate=43] (1.7,0.3) ellipse (1.5 and 0.9);
					\ExampleShared
					\node[circle,draw,fill,scale=0.5,color=green!50!black] (p) at (p1) {};
					\node[color=darkgreen] [below=0 of p] {\bm{$p$}};
					\draw[very thick,blue, rotate=43] (1.7,0.3) ellipse (1.5 and 0.9);
					\node[blue] at (0,2) {\bm{$S$}};
				\end{tikzpicture}
			}
			\subcaption{ \boldmath{$\textcolor{darkgreen}{p} \in \left( P' \setminus \textcolor{darkred}{S^*} \right)$}
				satisfies (\ref{eq:subspace-avoid-pen-bound}). 
				\textcolor{blue}{\bm{$S$}} and \boldmath{$\textcolor{blue}{S} \cup \{ \textcolor{darkgreen}{p} \}$} are cheap 
				and at least one of them is feasible.
				}
		\end{center}
	\end{subfigure}
	\hfill
	\begin{subfigure}[t]{0.45\textwidth}
		\begin{center}
			\scalebox{.8}{
				\begin{tikzpicture}
					\ExampleShared
					\draw[very thick,blue,rotate=60] (1.7,-1.35) ellipse (1.6 and 0.7);
					\node[blue] at (0.7,-0.2) {\bm{$S$}};
				\end{tikzpicture} 
			}
			\subcaption{
				No \boldmath{$p \in \left( P' \setminus \textcolor{darkred}{S^*} \right)$} satisfies (\ref{eq:subspace-avoid-pen-bound}).
				\textcolor{blue}{\bm{$S$}} is cheap and feasible since it agrees with \boldmath{$\textcolor{darkred}{S^*}$} on \bm{$P'$}.
			}
		\end{center}
	\end{subfigure}
	\caption{
		An illustration of the proof of Theorem \ref{thm:red-subspace-avoiding-coal}.
		Players in $\bm{P'}$ are drawn as black points, all others gray.
		\textcolor{darkred}{$\bm{S^*}$} is an optimum solution to the subspace-avoiding problem.
		The solution(s) found by our algorithm are shown in blue.
	}
	\label{fig:red-subspace-avoiding-coal}
\end{figure}

\begin{proof}	
	Let $(P, v, y, L)$ be an instance of \Psubspaceavoidingminexcess{} with $(P,v)\in\cF$ and $y \in \bR^P_{\geq 0}$.
	Let 
	\[
		P' \ \coloneqq \ \left\{ p\in P : {\{p\}}\notin L\right\}
	\]
	and $K\coloneqq \big\lceil \frac{1}{\varepsilon} \big\rceil$.
	Note that $P'\neq\emptyset$.
	Let $S^*$ be an optimum solution. We proceed differently depending on whether there exists $p\in \left( P' \setminus S^* \right)$ with low $y$-value
	\begin{equation}
		y_p \ \leq \ \varepsilon\cdot y(S^*) \enspace . 
		\label{eq:subspace-avoid-pen-bound}
	\end{equation}
	If not, we want to work with a certain \emph{in-set} $I \subseteq S^*$ of bounded size $|I| \leq K$.
	Even though we have to try all possibilities for $p$ and $I$ (because we do not know $S^*$), the procedure takes only polynomial time because $K$ is constant.
	\cref{fig:red-subspace-avoiding-coal} illustrates the following two cases.
	
	\begin{itemize}
		\item[(a)] $p \in \left( P' \setminus S^* \right)$ satisfies (\ref{eq:subspace-avoid-pen-bound}).\\
		Let $(S,\lambda)$ be an $\alpha$-approximate solution to \Pminexcess{} restricted to coalitions that do not contain $p$. 
		This can be computed in polynomial time by Lemma \ref{lem:price-collecting-coalition-restr}. 
		By $p \not\in S^*$,
		\[
			y\left(S\right)\ -\ \lambda \ \leq \  \alpha \cdot y \left(S^* \right) \ -\ v\left(S^*\right) \enspace .
		\]
		By $p \in P'$, we have $S\notin L$ or $(S\cup\{p\})\notin L$. 
		We claim that both solutions are sufficiently cheap, so we can simply return whichever is feasible.
		Since $\lambda$ remains valid for $S \cup \{ p \}$ by monotonicity of $(P,v)$, and $p$ satisfies \eqref{eq:subspace-avoid-pen-bound},
		\begin{align*}
			y\left(S\cup\{p\}\right)\ -\ \lambda
			&\ = \ y_p\ +\ y\left(S\right)\ -\ \lambda \\
			&\ \leq\ (\alpha+\varepsilon)\;\cdot\;y\left(S^*\right) \ - \ v\left(S^*\right) \enspace,
		\end{align*}
		where we used that $S^*$ is an optimum solution to the restricted problem.
		\item[(b)] No $p \in \left( P' \setminus S^* \right)$ satisfies (\ref{eq:subspace-avoid-pen-bound}).\\
		We assume to have found $I \subseteq (S^* \cap P')$ of size $|I| =  \min\{K, \abs{S^* \cap P'}\}$ that maximizes $y(I)$,
		and define the \emph{out-set}
		\[
		O\ \coloneqq \ 
		\begin{cases}
			\big\{ p\in \left( P'\setminus I \right) \ : \ y_p>\min_{q\in I}y_q \big\}
			\qquad &\text{if }\abs{I}=K \enspace , \\[.2cm]
			P'\setminus I &\text{if }\abs{I}<K \enspace .
		\end{cases}
		\]
		We claim that $P' \setminus S^*=O$.
		This is trivial for $\abs{I} < K$ by definition of $I$ and $O$.
		Otherwise, $O \subseteq (P' \setminus S^*)$ because $I$ maximizes $y(I)$ inside $S^* \cap P'$.
		Additionally, $(P' \setminus S^*) \subseteq O$, because for every $p \in (P' \setminus S^*)$, \eqref{eq:subspace-avoid-pen-bound} does not hold, and thus, using $y \geq 0$ for the third inequality,
		\[
		y_p \ > \ \varepsilon \cdot y\left(S^*\right) \ 
		\geq \ \frac{1}{K} \cdot y\left(S^*\right) \ 
		\geq \ \frac{1}{K} \cdot y\left(I \right) \
		\geq \ \min_{q \in I} y_q \enspace .
		\] 
		Let $(S,\lambda)$ be an $\alpha$-approximate solution to \Pminexcess{}
		restricted to coalitions $S$ with $S\cap P'=S^*\cap P'=P' \setminus O$, again using Lemma \ref{lem:price-collecting-coalition-restr}.
		Then, $S^*\notin L$ implies $S\notin L$, so $S$ is feasible. 
		Additionally, $y(S) - \lambda \leq \alpha\cdot y(S^*) - v(S^*)$,
		because $S^*$ is an optimum solution to the restricted problem.
		\qed
	\end{itemize}
\end{proof}

\begin{remark}
	An analogous result holds for cost games. Given a cost game $(P,c)$,
	where $c: 2^P \to \bR$ is monotonically non-decreasing, and $y\in\bR^P_{\geq 0}$, 
	an $\alpha$-approx-imation
	for \Pminexcess{} or \Psubspaceavoidingminexcess{} asks for a
	coalition $S\subseteq P$ such that $c(S) + y(P \setminus S)$ is at
	most $\alpha$ times the minimum possible value. This is sometimes
	called a \textit{prize-collecting} problem, and combining a proof analogous to that
	of \cref{thm:red-subspace-avoiding-coal}
	with known results \cite{blauth2025better,ahmadi2024prize} shows that there is a 1.599-approximation algorithm
	for \textsc{LSA Symmetric Prize-Collecting TSP},
	and a 1.7995-approximation algorithm for
	\textsc{LSA Prize-Collecting Steiner Tree}.
\end{remark}

\subsection{Nucleolus Instability}
\label{sec:nucleolus-unstable}

If \Pminexcess{} can be solved in polynomial time, \cref{thm:red-subspace-avoiding-coal}
implies that \Psubspaceavoidingminexcess{} can be solved with an
arbitrarily small relative error. However, this does not immediately
imply an approximation scheme for computing the nucleolus because
errors may accumulate during the MPS scheme. We will demonstrate that
even small errors when solving \Psubspaceavoidingminexcess{} can result in
large errors in the nucleolus. Thus, solving the separation problem approximately 
by \cref{thm:red-subspace-avoiding-coal} cannot lead to a reasonable
nucleolus approximation in general.

More precisely, we show that changing the value function by some $\varepsilon$
may change the nucleolus of the game by $\varepsilon\cdot 2^{\Omega(\abs{P})}$ per player.
This relates to \cref{thm:red-subspace-avoiding-coal}
in the following way: An oracle for \Psubspaceavoidingminexcess{} with a
relative error of $\varepsilon$ can essentially be viewed as an exact oracle
for a slightly different game, whose value function differs by at most
$\varepsilon\cdot y(S^*)$ from the original value function. Here,
$S^*$ denotes an optimum solution to \Psubspaceavoidingminexcess{}.

\nucleolusunstable

\begin{figure}[ht]
	\begin{center}
		\scalebox{.7}{
			\begin{tikzpicture}[
				player/.style={circle,fill=black,scale=0.4},
				min-exc-set/.style={line width=0.6mm,darkgreen},
				large-exc-set/.style={line width=0.5mm,darkred},
				large-exc-set-transparent/.style={line width=0.5mm,darkred!20},
				nucleolus/.style={line width=.6mm,cyan}
			]
				%
				%
				\node[player] (r) at (0,0) {};
				\node[player] (p11) at (2,1.5) {};
				\node[player] (p12) at (2,0.5) {};
				\node[player] (p13) at (2,-0.5) {};
				\node[player] (p14) at (2,-1.5) {};
				\node[player] (q11) at (3,1.5) {};
				\node[player] (q12) at (3,0.5) {};
				\node[player] (q13) at (3,-0.5) {};
				\node[player] (q14) at (3,-1.5) {};
				\node[player] (p21) at (5,1.5) {};
				\node[player] (p22) at (5,0.5) {};
				\node[player] (p23) at (5,-0.5) {};
				\node[player] (p24) at (5,-1.5) {};
				\node[player] (q21) at (6,1.5) {};
				\node[player] (q22) at (6,0.5) {};
				\node[player] (q23) at (6,-0.5) {};
				\node[player] (q24) at (6,-1.5) {};
				\node[player] (pf1) at (9,1.5) {};
				\node[player] (pf2) at (9,0.5) {};
				\node[player] (pf3) at (9,-0.5) {};
				\node[player] (pf4) at (9,-1.5) {};
				\node[player] (qf1) at (10,1.5) {};
				\node[player] (qf2) at (10,0.5) {};
				\node[player] (qf3) at (10,-0.5) {};
				\node[player] (qf4) at (10,-1.5) {};
				\node[player] (pg1) at (12,1.5) {};
				\node[player] (pg2) at (12,0.5) {};
				\node[player] (pg3) at (12,-0.5) {};
				\node[player] (pg4) at (12,-1.5) {};
				\node[player] (qg1) at (13,1.5) {};
				\node[player] (qg2) at (13,0.5) {};
				\node[player] (qg3) at (13,-0.5) {};
				\node[player] (qg4) at (13,-1.5) {};
				%
				%
				\draw[min-exc-set] (r) circle[radius=0.4];
				\foreach \x in {2.5, 5.5, 9.5, 12.5}{
					\foreach \y in {1.5, 0.5, -0.5, -1.5}{
						\draw[min-exc-set] (\x,\y) ellipse (1 and 0.4);
					}
				}
				%
				%
				\foreach \y in {0.5, -0.5, -1.5}{
				\draw[line width=.6mm, orange!20]
					(-0.2,0)
					to[out=90,in=180] (0, 0.2)
					to[out=0,in=180] (2,\y+0.2)
					to[out=0,in=90] (2.2, \y)
					to[out=270,in=0] (2,\y-0.2)
					to[out=180,in=0] (0, -0.2)
					to[out=180,in=270] (-0.2,0);
				}
				\draw[line width=.6mm, orange]
				(-0.2,0)
				to[out=90,in=180] (0, 0.2)
				to[out=0,in=180] (2,1.5+0.2)
				to[out=0,in=90] (2.2, 1.5)
				to[out=270,in=0] (2,1.5-0.2)
				to[out=180,in=0] (0, -0.2)
				to[out=180,in=270] (-0.2,0);
				%
				%
				\foreach \xleft in {3,6,7,10}{
					\foreach \y in {0.5, -0.5, -1.5}{
						\draw[large-exc-set-transparent]
							(\xleft, -1.7)
							to[out=180,in=270] (\xleft-0.2, -1.5)
							to[out=90,in=270] (\xleft-0.2, 0)
							to[out=90,in=270] (\xleft-0.2, 1.5)
							to[out=90,in=180] (\xleft, 1.7)
							to[out=0,in=180] (\xleft+2, \y+0.2)
							to[out=0,in=90] (\xleft+2.2, \y)
							to[out=270,in=0] (\xleft+2, \y-0.2)
							to[out=180,in=0] (\xleft, -1.7);
					}
					\foreach \y in {0.5, -0.5, -1.5}{
						\draw[large-exc-set]
						(\xleft, -1.7)
						to[out=180,in=270] (\xleft-0.2, -1.5)
						to[out=90,in=270] (\xleft-0.2, 0)
						to[out=90,in=270] (\xleft-0.2, 1.5)
						to[out=90,in=180] (\xleft, 1.7)
						to[out=0,in=180] (\xleft+2, 1.5+0.2)
						to[out=0,in=90] (\xleft+2.2, 1.5)
						to[out=270,in=0] (\xleft+2, 1.5-0.2)
						to[out=180,in=0] (\xleft, -1.7);
					}
				}
				\draw[white, fill] (6.7,-2) rectangle (8.3,2);
				\node[large-exc-set, scale=1.5] at (7.5,0) {\bm{$\cdots$}};
				%
				%
				\node [above=.3 of r] {\bm{$r$}};
				\node[nucleolus] [above=.6 of r] {\bm{$1$}};
		
				\node at ($(p11)+(0,.7)$) {\bm{$p^{(1)}_i$}};
				\node[] at ($(q11)+(.2,.7)$) {\bm{$q^{(1)}_i$}};
				\draw [nucleolus, decorate,decoration={brace,amplitude=5pt}]
				($(p11)+(-.2,1.1)$) -- ($(q11)+(.2,1.1)$) node[midway,yshift=1.3em]{\bm{$K$}};
				
				\node at ($(p21)+(0,.7)$) {\bm{$p^{(2)}_i$}};
				\node[] at ($(q21)+(.2,.7)$) {\bm{$q^{(2)}_i$}};
				\draw [nucleolus, decorate,decoration={brace,amplitude=5pt}]
				($(p21)+(-.2,1.1)$) -- ($(q21)+(.2,1.1)$) node[midway,yshift=1.3em]{\bm{$2K$}};
				
				\node at ($(pf1)+(.2,.7)$) {\bm{$p^{(n+1)}_i$}};
				\node[] at ($(qf1)+(.4,.7)$) {\bm{$q^{(n+1)}_i$}};
				\draw [nucleolus, decorate,decoration={brace,amplitude=5pt}]
				($(pf1)+(-.2,1.1)$) -- ($(qf1)+(.2,1.1)$) node[midway,yshift=1.3em]{\bm{$(n+1)K$}};
		
				\node at ($(pg1)+(.2,.7)$) {\bm{$p^{(n+2)}_i$}};
				\node[] at ($(qg1)+(.4,.7)$) {\bm{$q^{(n+2)}_i$}};
				\draw [nucleolus, decorate,decoration={brace,amplitude=5pt}]
				($(pg1)+(-.2,1.1)$) -- ($(qg1)+(.2,1.1)$) node[midway,yshift=1.3em]{\bm{$(n+2)K$}};
				%
				%
				\node[min-exc-set, below=.3 of r] {\bm{$1$}};
				\node[line width=.6mm, orange] at (.9, -1.1) {\bm{$1$}};
				\node[min-exc-set] at (2.5, -2.2) {\bm{$2K$}};
				\node[large-exc-set] at (4, -2) {\bm{$4K$}};
				\node[min-exc-set] at (5.5, -2.2) {\bm{$4K$}};
				\node[min-exc-set] at (9.4, -2.2) {\scriptsize \bm{$2(n+1)K$}};
				\node[large-exc-set] at (11, -2) {\scriptsize \bm{$4(n+1)K$}};
				\node[min-exc-set] at (12.6, -2.2) {\scriptsize \bm{$2(n+2)K$}};
			\end{tikzpicture}
		}
	\end{center}
	\caption{
		Illustration of the packing games in the proof of \cref{thm:nucleolus-unstable}:
		Vertices illustrate players, and the sets $\cS$ are drawn together with
		their weight $w(\cdot)$.
		For $\tilde w$, the weight of $\{r\}$ is changed to $(1-\varepsilon)$.
		Sets corresponding to zero-excess coalitions (for the nucleoli of both games)
		are drawn in green.
		Nucleolus values of $(P,v)$ are shown in blue. For $(P,\tilde v)$,
		they increase by $2^{l-2}\cdot\varepsilon$ for
		$p^{(l)}_i$, and decrease for $q^{(l)}_i$.
		$K$ may be any sufficiently large constant.
	}
	\label{fig:nucleolus-unstable}
\end{figure}

\begin{proof}
Let $n \in \mathbb{N}$, $\varepsilon > 0$, and
\[
P \ \coloneqq \ \{r\} \ \cup \ \bigcup_{l=1}^{n+2} \; \bigcup_{i=1}^4 \ \left\{\; p^{(l)}_{i} \ , \ q^{(l)}_i \;\right\} 
\]
with $|P| \in \mathcal{O}(n)$.
We define $(P,v)$ and $(P, \tilde v)$ as packing games:
For a family of sets $\cS \subseteq 2^P$ with set weights $w: \cS \to \bR$,
\[
v(S) \ \coloneqq \ \max \ \left\{\sum_{i=1}^k w(S_i) \ : \ S_1,\ldots,S_k \in \cS \ \text{are pairwise disjoint subsets of} \ S \right\}
\]
for $S \subseteq P$, and $\tilde v$ analogously (based on $\tilde w: \cS \to \bR$). 
This clearly results in monotone and superadditive games.
For some fixed $K \geq 2^n\cdot \varepsilon$,
we set $w$ (and therefore implicitly $\cS$) as follows, illustrated in \cref{fig:nucleolus-unstable}:
\begin{align*}
	w(\{r\})\ &=\ 1 \;, \\
	w\left( \left\{ r \, , \, p^{(1)}_i \right\} \right) \ &=\ 1 \qquad&\text{for } &i \in [4] \;, \\
	w\left( \left\{ p^{(l)}_i \, , \, q^{(l)}_i \right\} \right) \ &=\ 2\cdot l\cdot K
		\qquad&\text{for } l \in [n+2] \ , \ &i \in [4] \;, \\
	w\left( \left\{ q^{(l)}_1 \, , \, q^{(l)}_2 \, , \, q^{(l)}_3 \, , \, q^{(l)}_4 \, , \, p^{(l+1)}_i \right\}\right)
		\ &=\ 4\cdot l\cdot K
		\qquad&\text{for } l \in [n+1] \ , \ &i \in [4] \;. 
\end{align*}
We define $\tilde w$ the same way, except $\tilde w(\{r\})=1-\varepsilon$.
Obviously, $|v(S) - \tilde v(S)| \leq \varepsilon$ for all $S \subseteq P$.
We claim that the nucleolus $y$ of $(P,v)$ is given by
\begin{align*}
	y(r)\ &=\ 1 \;,\\
	y\left( p^{(l)}_i \right)\ =\ y\left(q^{(l)}_i\right)\ &=\ l\cdot K \qquad&\text{for } l \in [n+2] \ , \ i \in [4] \;,
\end{align*}
and the nucleolus $\tilde y$ of $(P,\tilde v)$ is given by
\begin{align*}
	\tilde y(r)\ &=\ 1-\varepsilon, \\
	\tilde y \left( p^{(l)}_i \right)\ &=\ l\cdot K + 2^{l-2}\cdot\varepsilon \qquad&\text{for } l \in [n+2] \ , \ i \in [4] \;, \\
	\tilde y\left( q^{(l)}_i \right)\ &=\ l\cdot K - 2^{l-2}\cdot\varepsilon \qquad&\text{for } l \in [n+2] \ , \ i \in [4] \;.
\end{align*}

First, we show that $y$ is in the core of $(P,v)$, and $\tilde y$ is in the core
of $(P,\tilde v)$. It is easy to see that they satisfy group rationality
(note that it suffices to check the coalitions in $\cS$). Additionally,
the coalitions with excess zero with respect to $y$ and $\tilde y$ form
a partition of $P$, so $y(P)=v(P)$ and $\tilde y(P)=\tilde v(P)$ follow
from LP duality: The coalitions with excess zero are a feasible packing, and
$y$ (respectively $\tilde y$) is a feasible solution to the dual of the
packing LP.

This duality argument also shows that any allocation $z \in \bR^P$ in the
core of $(P,v)$ or $(P,\tilde v)$ satisfies
\begin{equation}\label{eq:unstable-core-constraints}
	z\left(p^{(l)}_i\right) + z\left(q^{(l)}_i\right) \ = \ 2\cdot l\cdot K
\end{equation}
for all $l\in [n+2]$ and $i\in[4]$.
Therefore, it suffices to show that, for both $y$ and $\tilde y$, the minimum excess of a coalition
containing $p^{(l)}_i$ but not $q^{(l)}_i$ is equal to the minimum excess
of a coalition containing $q^{(l)}_i$ but not $p^{(l)}_i$: These are the
coalitions that determine how the total value in (\ref{eq:unstable-core-constraints})
is split between the two players.

For both games and allocations, it is easy to see that the minimum-excess coalitions
separating $p^{(l)}_i$ and $q^{(l)}_i$ are among $\left\{p^{(l)}_i\right\}$, $\left\{q^{(l)}_i\right\}$,
and $\left\{r, p^{(1)}_i\right\}$ (for $l=1$)
or $\left\{q^{(l-1)}_1,q^{(l-1)}_2,q^{(l-1)}_3,q^{(l-1)}_4, p^{(l)}_i\right\} \eqqcolon M^{(l)}_i$ (for $l>1$).
Computing these excess values yields
{
	\footnotesize
	\begin{align*}
		y\left(p^{(l)}_i\right) - v\left(\left\{p^{(l)}_i\right\}\right) \ &= \ l \cdot K \qquad \text{for} \ l \in [n+2] \ , \ i \in [4] \enspace ,\\[.3cm]
		y\left(q^{(l)}_i\right) - v\left(\left\{q^{(l)}_i\right\}\right) \ &= \ l \cdot K \qquad \text{for} \ l \in [n+2] \ , \ i \in [4] \enspace , \\[.3cm]
		y\left(\left\{r, p^{(1)}_i\right\}\right) - v\left(\left\{r, p^{(1)}_i\right\}\right) \ &= \ (1+K)-1 \\
		&= \ l\cdot K \qquad \text{for} \ l =1 \ , \ i \in [4] \enspace ,\\[.3cm]
		y\left(M^{(l)}_i\right)-v\left(M^{(l)}_i\right) \ &= \ 4\cdot(l-1)\cdot K + l\cdot K - 4\cdot (l-1)\cdot K \\
		&= \ l\cdot K \qquad \text{for} \ l \in \left( [n+2] \setminus \{1\} \right) \ , \ i \in [4] \enspace .
	\end{align*}
}

Since these excess values are balanced, it follows that $y$ is the
nucleolus of $(P,v)$. For $\tilde y$ and $(P,\tilde v)$, we get

{
	\footnotesize
	\begin{align*}
		\tilde y\left(p^{(l)}_i\right) - \tilde v\left(\left\{p^{(l)}_i\right\}\right) \ &= \ l \cdot K + 2^{l-2} \cdot \varepsilon \qquad \text{for} \ l \in [n+2] \ , \ i \in [4] \enspace ,\\[.3cm]
		\tilde y\left(q^{(l)}_i\right) - \tilde v\left(\left\{q^{(l)}_i\right\}\right) \ &= \ l \cdot K - 2^{l-2} \cdot \varepsilon \qquad \text{for} \ l \in [n+2] \ , \ i \in [4] \enspace , \\[.3cm]
		\tilde y\left(\left\{r, p^{(1)}_i\right\}\right) - \tilde v\left(\left\{r, p^{(1)}_i\right\}\right) \ &= \ \left( 1-\varepsilon+K+\frac{\varepsilon}{2} \right) - 1 \\
		&= \ l\cdot K - 2^{l-2} \cdot \varepsilon \qquad \text{for} \ l =1 \ , \ i \in [4] \enspace ,\\[.3cm]
		\tilde y\left(M^{(l)}_i\right) - \tilde v\left(M^{(l)}_i\right) \ &= \ 4\cdot \left( (l-1)\cdot K - 2^{l-3}\cdot\varepsilon \right)
		+ \left( l \cdot K + 2^{l-2}\cdot\varepsilon \right) \\
		&= \ l\cdot K - 2^{l-2}\cdot\varepsilon \qquad \text{for} \ l \in \left( [n+2] \setminus \{1\} \right) \ , \ i \in [4] \enspace .
	\end{align*}
}

This is the same as for $y$ and $(P,v)$ except for the error terms that
depend on $\varepsilon$. Note that $\left\{p^{(l)}_i\right\}$ is no longer a minimum-excess
coalition, but the excess of the other two coalitions is still balanced.
This shows that $\tilde y$ is the nucleolus of $(P,\tilde v)$.

Since $y$ and $\tilde y$ differ by $2^n\cdot\varepsilon$ for
$p^{(n+2)}_i$ and $q^{(n+2)}_i$ ($i \in [4]$), this proves the theorem.
\qed
\end{proof}

\section{Conclusion}
\label{sec:future_research}

We have introduced the \Pnonzerominexcess{} problem and shown that it can
be used to compute the nucleolus. This relates nucleolus computation
to results on non-zero-constrained optimization.
We demonstrated the power of this approach by applying it
to arboricity and network strength games. This strengthens previous
results on these games (which require a non-empty core), while
at the same time being much simpler. We believe that computing the nucleolus
for more classes of games by solving \Pnonzerominexcess{} is a promising
direction for further research.

For $b$-matching games with $b\leq 2$, we showed that \Psubspaceavoidingminexcess{}
is equivalent to the \textsc{Shortest Non-Zero Cycle} problem. The complexity
of this, and of the related \textsc{Shortest Odd Cycle} problem, are still
unknown. Since there is already some interest in this from different
areas, these are important open problems.
While the general case is still open, we showed that the nucleolus of
$b$-matching games with $b\leq 2$ can be computed if $b=2$ only for
a constant number of vertices. This also gives a new proof that the
nucleolus for matching games can be computed in polynomial time. For
bounded weights and cost allocations, we showed that \Psubspaceavoidingminexcess{}
can be solved in randomized polynomial time. However, turning this into
an algorithm to compute the nucleolus for instances with bounded weights
would require a strong bound on the bit complexity of the $b$-matching
nucleolus. We are not aware of any such bound.

We also provided insights into \Psubspaceavoidingminexcess{} for general
monotone games,
including a reduction to \Pminexcess{} that incurs an arbitrarily small error,
which is best possible.
The nucleolus, however, was shown to be very unstable under slight changes in the value function.

\printbibliography

\newpage
\appendix
\renewcommand{\theHsection}{appendix.\thesection}

\section{Proof of \cref{thm:equiv-matching-problems}}
\label{appendix:proof-equiv-matching-problems}

	We will present polynomial reductions from \ref{thm-item:equiv-b-matching} to \ref{thm-item:equiv-non-zero-matching} to \ref{thm-item:equiv-non-zero-cycle} to \ref{thm-item:equiv-b-matching}.
	All reductions work the same way when restricting to instances with polynomially bounded weights, cost allocations, and costs.
	\begin{itemize}
		\item[(a)] Reducing \ref{thm-item:equiv-b-matching} to \ref{thm-item:equiv-non-zero-matching}:\\
		Let $(G,w,b, a, y)$ be an instance of \Psubspaceavoidingminexcess{}
		for $b$-matching games with $b\leq2$. 
		Recall that for $G=(V,E)$, we have $w \in \bR^E$, $b \in \{1,2\}^V$, $a \in \bZ^V$, and $y \in \bR_{\geq 0}^V$.
		We will polynomially reduce this problem to \textsc{Maximum Weight Non-Zero Matching} by defining an appropriate instance $(G',w',a')$.
		Let $K \coloneqq 2 \cdot \left( \sum_{e \in E} \abs{w(e)} + 2 \cdot \sum_{v \in V} \abs{y(v)} \right) + 1$.
		For every vertex $v \in V$, we add four vertices $v_1, \tilde{v}_1, v_2, \tilde{v}_2$ 
		and three edges $\{v_1, \tilde{v}_1\}$, $\{v_2, \tilde{v}_2\}$ and $\{\tilde{v}_1, \tilde{v}_2\}$ to $G'$,
		and set $w'(\{v_1, \tilde{v}_1\}) = w'(\{v_2, \tilde{v}_2\})=\frac{K+y(v)}{2}$ and $w'(\{\tilde{v}_1, \tilde{v}_2\})=K$.
		We call this the \emph{node gadget} of $v$.
		For every edge $e=\{v,w\} \in E$, we add two vertices $v_{e}$, $w_{e}$, and up to five edges to $G'$:
		$\{v_e, w_e\}$ and $ \{x_i, x_e\}$ for $x \in e$ and $i \leq b(x)$, 
		and set $w'(\{v_e, w_e\}) = K$ and $w'(\{x_i, x_e\}) = \frac{K+w(e)}{2}$.
		We call this the \emph{edge gadget} of $e$.
		Finally, we set $a'(\{\tilde{v}_1, \tilde{v}_2\})=a(v)$ for all $v \in V$, and $a' \equiv 0$ otherwise.
		See \cref{fig:gadgets} (a).
		
		Consider a coalition $S \subseteq V$ with associated $b$-matching $M \subseteq E(G[S])$ with $v(S)=w(M)$. 
		We can define a matching $M'$ in $G'$ by all the edges $\{\tilde{v}_1, \tilde{v}_2\}$ for $v \in S$,
		the pairs of edges $\{v_1, \tilde{v}_1\}$, $\{v_2, \tilde{v}_2\}$ for $v \in V \setminus S$, 
		one of the edges $\{v_i, v_e\}$, $i \in \{1,2\}$, for every $e \in M$ and $v \in e$,
		and all the edges $\{v_e, w_e\}$ for $e=\{v,w\} \in E \setminus M$.
		This matching has weight 
		\[
		w'(M') \ = \ K \cdot (|V| + |E|) + y(V) + v(S) - y(S)
		\]
		and $a'(M') = a(S)$.
		Note that every maximum weight matching in $G'$ will collect weight at least $K$ in every node and edge gadget,
		and this property remains to hold for non-zero maximum weight matchings.
		Especially, for every $e=\{v,w\} \in E$, it will either contain two edges $\{v_i, v_e\}$ and $\{w_j, w_e\}$ for some $i,j \in \{1,2\}$ or none.
		Further, if the matching contains some edge $\{v_i, v_e\}$ for $i \in \{1,2\}$ and $e \in E$, it must also contain the edge $\{\tilde{v}_1, \tilde{v}_2\}$, and thereby collect its $a'$-value of $a(v)$.
		This implies that any maximum weight (non-zero) matching $M'$ in $G'$ yields a coalition $S \coloneqq \{ v \in V : \{\tilde{v}_1, \tilde{v}_2\} \in M' \}$ and a corresponding feasible $b$-matching
		\[
		M \ \coloneqq \ \big\{ e \in E \ : \ \{v_i, v_e\} \in M' \text{ for } v \in e \text{ and some } i \in \{1,2\} \big\} \ \subseteq \ E(G[S])
		\] 
		with 
		\begin{align*}
			w'(M') \ &= \ K \cdot (|V| + |E|) + y(V) + w(M) - y(S)  \\
			&\leq \ K \cdot (|V| + |E|) + y(V) + v(S) - y(S)
		\end{align*}
		and $a'(M') = a(S)$.
		Thus, solving the instance $(G', w', a')$ of \textsc{Maximum Weight Non-Zero Matching} suffices to solve the instance $(G, w, b, a, y)$ of \Psubspaceavoidingminexcess{}.
		\item[(b)] Reducing \ref{thm-item:equiv-non-zero-matching} to \ref{thm-item:equiv-non-zero-cycle}:\\
		Let $(G, w, a)$ be an instance of \textsc{Maximum Weight Non-Zero Matching}.
		Recall that for $G=(V,E)$, we have $w \in \mathbb{R}^E$ and $a \in \mathbb{Z}^E$.
		We will polynomially reduce this problem to \textsc{Shortest Non-Zero Cycle}.
		The same techniques were used in a slightly different context in \cite{el2025finding}.
		By adding a new vertex to $G$ if $\abs{V}$ is odd, and adding trivial edges
		with weight and $a$-value 0 between all pairs of vertices, we can
		ensure that any matching in $G$ can be completed to a perfect matching
		without changing its weight or $a$-value.
		We will thus assume without loss of generality that $G$ already has this property.
		First, we compute a maximum weight matching $\bar M$ in $G$ in polynomial time \cite{edmonds1965maximum}. 
		If $a(\bar M)\neq0$, then $\bar M$ is an optimum solution. 
		Otherwise, let $M^*$ be an optimum solution, i.e., a matching with $a(M^*)\neq0$ and maximum weight.
		As mentioned above, we can assume that $\bar M$ and $M^*$ are perfect matchings.
		Thus, the symmetric difference $\bar M \Delta M^*$ can be decomposed into disjoint $\bar M$-$M^*$-alternating cycles. 
		For each such cycle $C$, \mbox{$w(C\cap \bar M)\geq w(C \cap M^*)$} since $\bar M$ has maximum weight. 
		Additionally, at least one of these cycles $C$ satisfies $a(C \cap \bar M)\neq a(C \cap M^*)$.
		Therefore, a non-zero matching with weight (at least) $w(M^*)$ can be obtained by computing an $\bar M$-augmenting cycle $C$ with $a(C\cap\bar M) \neq a(C \setminus \bar M)$ that minimizes $w(C \cap \bar M)-w(C \setminus \bar M)$.
		Let $K \coloneqq 2\sum_{e \in E}\abs{w(e)}+1$, and for $e \in E$,
		\[
		c(e) \ \coloneqq \ \begin{cases}
			w(e) - K \qquad&\text{if }e \in \bar M \enspace ,\\
			K-w(e) \qquad&\text{if }e \notin \bar M \enspace ,
		\end{cases}
		\]
		and
		\[
		a'(e) \ \coloneqq \ \begin{cases}
			-a(e) \qquad&\text{if }e \in \bar M \enspace ,\\
			a(e) \qquad&\text{if }e \notin \bar M \enspace .
		\end{cases}
		\]
		$(G, c, a')$ defines an instance of \textsc{Shortest Non-Zero Cycle}.
		The construction is illustrated in \cref{fig:gadgets} (b).
		Observe that we have a one-to-one correspondence between non-zero $\bar M$-alternating cycles, and non-zero cycles with cost at most $\frac{K}{2}$. 
		Further, $c$ is conservative, since $\bar M$ has maximum weight.
		Thus, solving the instance $(G,c,a')$ of \textsc{Shortest Non-Zero Cycle} suffices to solve the instance $(G, w, a)$ of \textsc{Maximum Weight Non-Zero Matching}.
		\item[(c)] Reducing \ref{thm-item:equiv-non-zero-cycle} to \ref{thm-item:equiv-b-matching}:\\
		Let $(G,c,a)$ be an instance of \textsc{Shortest Non-Zero Cycle}.
		Recall that for $G=(V,E)$, we have conservative costs $c \in \mathbb{R}^E$, and $a \in \mathbb{Z}^E$.
		We will polynomially reduce this problem to \Psubspaceavoidingminexcess{}
		for a $b$-matching game with $b\leq2$.
		Let $G'=(V',E')$ result from $G$ by subdividing each edge $e \in E$ into a path of two edges $e_1, e_2$ with new middle vertex $v_e$.
		We set $w(e_1)=K-c(e)$, $w(e_2)=K$, 
		and $a'(v_e)=a(e)$.
		Let $K\coloneqq 2\sum_{e \in E}\abs{c(e)}+1$, $y \equiv K$, $b \equiv 2$ on $V'$,
		and $a' | _V \equiv 0$ on the original vertex set $V$.
		$(G', w, b, a', y)$ defines an instance of \Psubspaceavoidingminexcess{}
		for a $b$-matching game with \mbox{$b\leq2$}.
		The construction is illustrated in \cref{fig:gadgets} (c). 
		A non-zero cycle $C$ in $G$ corresponds to the coalition $S \coloneqq V(C) \cup \{v_e : e \in E(C)\}$ with $a'(S)=a(C)\neq0$ and excess
		\[
		y(S)-v(S) \ \leq \ 2\cdot\abs{V(C)}\cdot K - (2\cdot\abs{V(C)}\cdot K - c(C)) \ = \ c(C) \enspace .
		\]
		For any coalition $S$ of the $b$-matching game with excess at most $\frac{K}{2}$, 
		the corresponding $b$-matching clearly contains exactly two edges incident to each vertex in $S$. 
		In particular, it is a set of disjoint cycles in $G$ with total cost $y(S)-v(S)$. 
		If $S$ is a non-zero coalition with minimum excess, then one of these cycles $C$ satisfies $a(C)\neq0$,
		and all other cycles have cost 0 because $c$ is conservative. 
		It follows that $C$ is a shortest non-zero cycle. 
		Thus, solving the $b$-matching game instance $(G', w, b, a', y)$ of 
		\Psubspaceavoidingminexcess{} suffices to solve the instance $(G, c, a)$ of \textsc{Shortest Non-Zero Cycle}.
		\qed 
	\end{itemize}

\end{document}